\colorlet{gray}{lightgray!50}
\newcommand{\EE}{\mathbb{E{}}}
\newcommand{\argmin}{\mathop{\rm argmin}}
\newcommand{\intr}{\mathop{ int}}
\newtheorem{lem}{Lemma}[section]
\newtheorem{asmp}{Assumption}[section]
\newtheorem{defn}{Definition}[section]
\newtheorem{example}{Example}[section]
\def\approxcorrect{\checkmark\kern-1.1ex\raisebox{.89ex}{$\times$}}
\def\eqref#1{equation~\ref{#1}}
\def\1{\bm{1}}
\DeclareMathAlphabet{\mathsfit}{\encodingdefault}{\sfdefault}{m}{sl}
\SetMathAlphabet{\mathsfit}{bold}{\encodingdefault}{\sfdefault}{bx}{n}
\newcommand{\Var}{\mathrm{Var}}
\title{Uncoupled and Convergent Learning in Monotone Games under Bandit Feedback}
\author[1]{Jing Dong}
\author[1]{Baoxiang Wang}
\author[2]{Yaoliang Yu}
\affil[1]{The Chinese University of Hong Kong, Shenzhen}
\affil[2]{University of Waterloo, Vector Institute}
\begin{document}

\maketitle

\begin{abstract}
We study the problem of no-regret learning algorithms for general monotone and smooth games and their last-iterate convergence properties. Specifically, we investigate the problem under bandit feedback and strongly uncoupled dynamics, which allows modular development of the multi-player system that applies to a wide range of real applications. We propose a mirror-descent-based algorithm, which converges in $O(T^{-1/4})$ and is also no-regret. The result is achieved by a dedicated use of two regularizations and the analysis of the fixed point thereof.
The convergence rate is further improved to $O(T^{-1/2})$ in the case of strongly monotone games.
Motivated by practical tasks where the game evolves over time, the algorithm is extended to time-varying monotone games. We provide the first non-asymptotic result in converging monotone games and give improved results for equilibrium tracking games.



\end{abstract}
\section{Introduction}

We consider multi-player online learning in games. 
In this problem, the cost function for each player is unknown to the player, and they need to learn to play the game through repeated interaction with other players. We focus on a class of monotone and smooth games, which was first introduced by \cite{rosen1965existence}. This encapsulates a wide array of common games, such as two-player zero-sum games, monotone-concave games, and zero-sum polymatrix games \citep{bregman1987methods}. Our goal is to find algorithms that solve the problem under bandit feedback and strongly uncoupled dynamics. Within this context, each player can only access information regarding the cost function associated with their chosen actions without prior insight into their counterparts. This allows modular development of the multi-player system in real applications and leverages existing single-agent learning algorithms for reuse.




Many works have focused on the time-average convergence to Nash equilibrium on learning in monotone games \citep{even2009convergence,syrgkanis2015fast,farina2022near}. 
However, these works only guarantee the convergence of the time average of the joint action profile. Such convergence properties are less appealing, because while the trajectories of the players converge in the time-average sense, it may still exhibit cycling \citep{mertikopoulos2018cycles}. This jeopardizes the practical use of such algorithms.

Popular no-regret algorithms such as mirror descent have demonstrated convergence in the last iterate within specific scenarios, such as two-player zero-sum games \citep{cai2023uncoupled} and strongly monotone games \citep{bravo2018bandit,drusvyatskiy2022improved,lin2021doubly}.
Yet convergence to Nash equilibrium in monotone and smooth games is not available unless one assumes exact gradient feedback and coordination of players \citep{cai2022finite,cai2023doubly}. 
It remains open as to whether a no-regret algorithm can efficiently converge to a Nash equilibrium in monotone games with bandit feedback and strongly uncoupled dynamics. In this paper, we investigate the pivotal question:
\begin{center}
\textit{How fast can no-regret algorithms converge (in the last iterate) to a Nash equilibrium in general monotone and smooth games with bandit feedback and strongly uncoupled dynamics?}
\end{center}




In this work, we present a mirror-descent-based algorithm designed to converge to the Nash equilibrium in static monotone and smooth games. Our algorithm is uncoupled and convergent and is applicable to the general monotone and smooth game setting. Motivated by real applications, where many games are also time-varying, we extend our study to encompass time-varying monotone games. This allows the algorithm to be deployed in both stationary and non-stationary tasks.
We achieve state-of-the-art results in both monotone games and time-varying monotone games.


We summarize our main contributions as follows:
\begin{itemize}
    \item In static monotone and smooth games:
    \begin{itemize}
    \item Under bandit feedback and strongly uncoupled dynamics, we show our algorithm achieves a last-iterate convergence rate of $O(T^{-1/4})$.
    \item In cases where the game exhibits strong monotoneity, our result improves to $O(T^{-1/2})$, matching the current best available convergence rates for strongly monotone games \citep{drusvyatskiy2022improved,lin2021doubly}. 
    \item Our algorithm is no regret albeit players may be self-interested. The individual regret is at most $O(T^{1/4})$ in monotone games and at most $O(T^{1/2})$ in strongly monotone games. 
\end{itemize}
\item In time-varying monotone and smooth games:
\begin{itemize}
    \item If the game eventually converges to a static state within a time frame of $O(T^\alpha)$, our algorithm achieves convergence in $O(T^{-1/4+\alpha})$. 
    \item If the game does not converge but experiences gradual changes in the Nash equilibrium that evolves in $O(T^\psi)$, our algorithm exhibits convergence rates of $O \left(\max{T^{2\varphi/3-2/3 }, T^{(4\varphi + 5)^2/72-9/8}}\right)$. The algorithm outperforms best available results of $T^{\varphi/5-1/5 }$ by \cite{duvocelle2023multiagent} and $T^{\varphi/3-2/3}$ by \cite{yan2023fast}.
\end{itemize}
\end{itemize}

Table \ref{tab:static} and Table \ref{tab:time} summarize our results and the results of previous works.

\section{Related Works}

\paragraph{Monotone games}
The convergence of monotone games has been studied in a significant line of research.
For a strongly monotone game under exact gradient feedback, the linear last-iterate convergence rate is known \citep{tseng1995linear,liang2019interaction,zhou2020convergence}. Under noisy gradient feedback, \cite{jordan2023adaptive} showed a last-iterate convergence rate of $O(T^{-1})$. Under bandit feedback, \cite{bervoets2020learning} proposed an algorithm that asymptotically converges to the equilibrium if it is unique. \cite{bravo2018bandit} subsequently introduced an algorithm with a last-iterate convergence rate of $O(T^{-1/3})$, while also ensuring the no-regret property. Later works \citep{lin2021doubly} further improved the last-iterate convergence rate to $O(T^{-1/2})$ under bandit feedback using the self-concordant barrier function. \cite{jordan2023adaptive} gave a result of the same rate, but with the additional assumption that the Jacobian of each player’s gradient is Lipschitz continuous. In the case of bandit but noisy feedback (with a zero-mean noise), \cite{lin2021doubly} showed that the convergence rate is still $O(T^{-1/2})$.

For monotone but not strongly monotone games, \cite{mertikopoulos2019learning} leveraged the dual averaging algorithm to demonstrate an asymptotic convergence rate under noisy gradient feedback. With access to the exact gradient information, \cite{cai2023doubly} gave a last-iterate convergence rate of $O(T^{-1})$. In the context of bandit feedback, \cite{tatarenko2019learning} proposed an algorithm that asymptotically converges to the Nash equilibrium. Table \ref{tab:static} provides a summary of the recent results. 

\paragraph{Time-varying monotone games}
Motivated by real-world applications such as Cournot competition, where multiple firms supply goods to the market and pricing is subject to fluctuations due to factors like weather, holidays, and politics. \cite{duvocelle2023multiagent} studied the strongly monotone game under a time-varying cost function. When the game converges to a static state, they propose an algorithm that achieves asymptotic convergence under bandit feedback. Assuming the cost function varies $O(T^\phi)$ across a horizon $T$, \cite{duvocelle2023multiagent} provided an algorithm that attains a convergence rate of $O(T^{\phi/5-1/5})$ under bandit feedback. Subsequent work of \cite{yan2023fast} further improved this rate to $O(T^{\phi/3 - 2/3})$ under exact gradient feedback. 
\begin{table}[t]
    \centering
    \caption{Summary of results for monotone games. "E" stands for the result in expectation and "P" stands for the result held in high probability. Strongly monotone games are abbreviated to "StroM", while monotone games are abbreviated to "M". We use "linear*" to denote the two-player zero-sum game, which is a special case of the linear game. We use "(N)" to remark that the results can also be obtained with noisy feedback. }
    \label{tab:static}
    \begin{tabular}{c|c|c|c}
    \toprule
    & Class of games & Feedback    & Results \\
    \midrule
    \cite{bravo2018bandit}     & StroM & bandit  & $O(T^{-1/3})$ (E) \\
    \cite{drusvyatskiy2022improved} & StroM &  bandit  & $O(T^{-1/2})$ (E) \\ 
    \cite{lin2021doubly}& StroM &  bandit (N)  & $O(T^{-1/2})$ (E) \\
    \cite{jordan2023adaptive} & StroM & noisy gradient &  $O(T^{-1})$\\
    \rowcolor{gray}\textbf{Ours} & StroM & bandit (N) & $O(T^{-1/2})$  (E \& P)\\
    \cite{mertikopoulos2019learning} & M & noisy gradient & asymptotic \\
    \cite{cai2023doubly}& M & exact gradient & $O(T^{-1})$ \\
    \cite{tatarenko2019learning}& M &  bandit  & asymptotic \\
    \rowcolor{gray}\textbf{Ours} & M & bandit (N) & $O(T^{-1/4})$  (E)\\
    \midrule
    \cite{cai2023uncoupled} & linear* &  bandit  & $O(T^{-1/6})$  (E)\\
    \rowcolor{gray}\textbf{Ours} & linear &  bandit  & $O(T^{-1/6})$  (E)\\
    \bottomrule
    \end{tabular}
\end{table}
\begin{table}[t]
    \centering
    \caption{Summary of last-iterate convergence results for time-varying games. All results here are in expectation results. Strongly monotone games are abbreviated to "StroM", and monotone games are abbreviated to "M".}
    \label{tab:time}
    \begin{tabular}{c|c|c|c|c}
    \toprule
    & \shortstack{Class of \\ games}  & Time-varying property & Feedback & Results \\
    \midrule 
    \cite{duvocelle2023multiagent} & StroM  & converging in $O(T^\alpha)$ & bandit  & asymptotic \\
    \rowcolor{gray}\textbf{Ours} & M &  converging in $O(T^\alpha)$ &  bandit  & $O(T^{-1/4 + \alpha})$ \\
    \midrule 
    \cite{duvocelle2023multiagent} & StroM   & $O(T^\varphi)$ variation path &  bandit  & $O(T^{\varphi / 5 - 1/5})$\\
    \cite{yan2023fast} & StroM & $O(T^\varphi)$ variation path & \shortstack{exact \\ gradient} & $O(T^{\varphi/3-2/3})$ \\
     \rowcolor{gray}\textbf{Ours}& M & $O(T^\varphi)$ variation path  &  bandit  & \shortstack{$O \left(\max\{T^{2\varphi/3-2/3 }\right.$ \\ $\left.T^{(4\varphi + 5)^2/72-9/8}\}\right)$}\\
    \bottomrule
    \end{tabular}
    
\end{table}
\section{Preliminaries}
We consider a multi-player game with $n$ players, with the set of players denoted as $\mathcal{N}$. Each player $i$ takes action on a compact and monotone set $\mathcal{X}_i\subseteq \mathbb{R}^d$ of $d$ dimensions, and has cost function $c_i(x_i, x_{-i})$, where $x_i \in \mathcal{X}_i$ is the action of the $i$-th player and $x_{-i} \in \prod_{j \in [n], j\neq i}\mathcal{X}_j$ is the action of all other players. We assume the radius of $\mathcal{X}_i$ is bounded, i.e., $\|x - x^\prime\| \leq B, \forall x, x^\prime \in \mathcal{X}_i$. Without loss of generality, we further assume $c_i(x) \in [0,1]$. 

In this work, We study a class of monotone continuous games, where the cost functions are monotone and continuous (Assumption \ref{asmp:game}). Games that satisfy this assumption include monotone-concave games, monotone potential games, extensive form games, Cournot competition, and splittable routing games. A discussion of these games is available in Section \ref{sec:example}. Note that the class of monotone continuous games is commonly studied in the literature \citep{lin2021doubly,farina2022near}.

\begin{asmp}\label{asmp:game}
    For all player $i \in \mathcal{N}$, the cost function $c_i(x_i, x_{-i})$ is continuous, differentiable, convex, and $\ell_i$-smooth in $x_i$. Further, $c_i$ has bounded gradient $|\nabla_i c_i(x)|\leq G$ and the gradient $F(x) = [\nabla_i c_i(x)]_{i \in \mathcal{N}}$ is a monotone operator, i.e., $(F(x) - F(y))^\top(x-y) \geq 0$, $\forall x,y$. 
\end{asmp}
For notational convenience, we denote $L = \sum_{i \in \mathcal{N}} \ell_i$.

A common solution concept in the game is Nash equilibrium, which is a state of dynamic where no player can reduce its cost by unilaterally changing its action. Our aim is to learn a Nash equilibrium $x^\ast \in \prod_i\mathcal{X}_i$ of the game. Formally, the Nash equilibrium is defined as follows. 
\begin{defn}[Nash equilibrium]
An action $x^\ast \in \prod_i\mathcal{X}_i$ is a Nash equilibrium if $c_i(x^\ast)  \leq c_i(x_i, x_{-i}^\ast) \,, \ \forall x_i \in \mathcal{X}_i, x_i \neq x_i^\ast, i \in \mathcal{N}$.
\end{defn}
When the game satisfies Assumption \ref{asmp:game}, and is with a compact action set, it is known that it must admit at least one Nash equilibrium \citep{debreu1952social}. 

\subsection{Examples of monotone Continuous Games}\label{sec:example}
A wide range of monotone games are captured by Assumption \ref{asmp:game}, and we now present a few classic examples. We include more examples in the appendix.

\begin{example}[monotone-concave game]
    Consider a two-player convex-concave game, where the objective function is $c_1(x_1, x_2) = f(x_1, x_2)$, $c_2(x_1, x_2) = -f(x_1, x_2)$. It is immediate that if $f$ is continuous, differentiable, smooth, convex in $x_1$, concave in $x_2$, then the game satisfies Assumption \ref{asmp:game}. Examples are rock paper scissors and chicken games. 
\end{example}

\begin{example}[Cournot competition]
In the Cournot oligopoly model, there is a finite set of $N$ firms, where firm $i$ supplies the market with a quantity $x_i \in\left[0, C_i\right]$ of some good and $C_i$ is the firm's production capacity. The good is priced as a decreasing function $P(x_{\mathrm{tot}})= a-b x_{\mathrm{tot}}$, where $x_{\mathrm{tot}}=\sum_{i=1}^N x_i$ is the total number of goods supplied to the market, and $a, b > 0$ are positive constants. The cost of firm $i$ is then given by $c_i(x_i, x_{-i})=d_i x_i - x_i P(x_{\mathrm{tot}})$, where $d_i$ is the cost of producing one unit of good. This is the associated production cost minus the total revenue from producing $x_i$ units of goods. It is clear that $c_i$ is continuous and differentiable, and \cite{bravo2018bandit} showed $c_i$ has positive definite and bounded hessian (is convex and smooth). 
\end{example}

\begin{example}[Splittable routing game]
In a splittable routing game, each player directs a flow, denoted as $f_i$, from a source to a destination within an undirected graph $G=(V, E)$. Each edge $e \in E$ is linked to a latency function, represented as $\ell_e(f)$, which denotes the latency cost of the flow passing through the edge. The strategies available to player $i$ are the various ways of dividing or "splitting" the flow $f_i$ into distinct paths connecting the source and the destination. With some restrictions on the latency function, the game satisfies Assumption \ref{asmp:game} \citep{roughgarden2015local}.
\end{example}


\subsection{Bandit Feedback and Strongly Uncoupled Dynamic}
In this work, we focus on learning under bandit feedback and strongly uncoupled dynamics. The bandit feedback setting restricts each player to only observe the cost function $c_i(x_i, x_{-i})$ with respect to the action taken $x_i$. The strongly uncoupled learning dynamic \citep{daskalakis2011near} means players do not have prior knowledge of cost function or the action space of other players and can only keep track of a constant amount of historical information. As the bandit feedback and strongly uncoupled dynamic only require each player to access information of its own, this allows for modular development of the multi-player system, by reusing existing single-agent learning algorithms.

\section{Algorithm}
Our algorithm builds upon the renowned mirror-descent algorithm. The efficacy of online mirror-descent in solving Nash equilibrium has been demonstrated under full information, or either linear or strongly monotone cost functions, with extensive investigations into its last-iterate convergence investigated in \cite{cen2021fast,lin2021doubly,cai2023uncoupled,duvocelle2023multiagent}. 

Our algorithm differs from classic online mirror descent approaches by making use of two regularizers: A self-concordant barrier regularizer $h$ to build an efficient Ellipsoidal gradient estimator and contest the bandit feedback; and a regularizer $p$ to accommodate monotone (and not strongly monotone) games. Similar use of two regularizers has also been investigated \citep{lin2021doubly}. However, their method used the Euclidean norm regularize, which cannot be extended to our setting. 


 
\paragraph{Regularizers}
Let $h$ be a $\nu$-self-concordant barrier function (Definition \ref{def:self_concordant}), $p$ be a convex function with $\mu I \preceq \nabla^2 p(x) \preceq \zeta I$, $\zeta > 0, \mu \geq 0$. Let $D_p$ denote the Bregman divergence induced by $p$. We choose $p$ such that for any $x_i, x_i^\prime \in \mathcal{X}_i$, $D_p(x_i, x_i^\prime) \leq C_p < \infty$, and for some $\kappa > 0$, $c_i(x_i, x_{-i}) - \kappa p(x_i) $ to be convex. Notice that when $c_i$ is convex but not linear, we can always find such $p$ when the action set is bounded. Intuitively, this is to interpolate a function $p$ that possesses less curvature than all $c_i$. We will discuss the modification to the algorithm needed when $c_i$ is linear in Section \ref{sec:linear}.

\begin{defn}\label{def:self_concordant}
A function $h: \intr(\mathcal{X}) \mapsto \mathbb{R}$ is a $\nu$-self concordant barrier for a closed convex set $\mathcal{X} \subseteq \mathbb{R}^n$, where $\intr(\mathcal{X})$ is an interior of $\mathcal{X}$, if 1) $h$ is three times continuously differentiable; 2) $h(x) \rightarrow \infty$ if $x \rightarrow \partial \mathcal{X}$, where $\partial \mathcal{X}$ is a boundary of $\mathcal{X}$; 3) for $\forall x \in \operatorname{int}(\mathcal{X})$ and $\forall \lambda \in \mathbb{R}^n$, we have $\left|\nabla^3 h(x)[\lambda, \lambda, \lambda]\right| \leq 2\left(\lambda^{\top} \nabla^2 h(x) \lambda\right)^{3 / 2}$ and $\left|\nabla h(x)^{\top} \lambda\right| \leq \sqrt{\nu}\left(\lambda^{\top} \nabla^2 h(x) \lambda\right)^{1 / 2}$ where $\nabla^3 h(x)\left[\lambda_1, \lambda_2, \lambda_3\right]=\left.\frac{\partial^3}{\partial t_1 \partial t_2 \partial t_3} h\left(x+t_1 \lambda_1+t_2 \lambda_2+t_3 \lambda_3\right)\right|_{t_1=t_2=t_3=0}$\,.
\end{defn}

It is shown that any closed convex domain of $\mathbb{R}^d$ has a self-concordant barrier \citep{lee2021universal}.

\paragraph{Ellipsoidal gradient estimator}
As our algorithm operates under bandit feedback and strongly uncoupled dynamics, we would need to design a gradient estimator while only using costs for the individual player. 

Let $\mathbb{S}^d$, $\mathbb{B}^{d}$ be the $d$-dimensional unit sphere and the $d$-dimensional unit ball, respectively. Our algorithm estimates the gradient using the following ellipsoidal estimator: 
\begin{align*}
    \hat{g}_i^t = \frac{d}{\delta_t} c_i(\hat{x}^t)(A_i^t)^{-1} z_i^t \,, \quad A_i^t = (\nabla^2 h(x_i^t) + \eta_t  (t+1) \nabla^2 p(x_i^t))^{-1/2}\,, \quad \hat{x}_i^t = x_i^t + \delta_t A_i^t z_i^t\,,
\end{align*}
where $z_i^t$ is uniformly independently sampled from $\mathbb{S}^d$ and $\delta_t, \eta_t \in [0,1]$ are tunable parameters.

One can show that $\hat{g}_i^t$ is an unbiased estimate of the gradient of a smoothed cost function $\hat{c}_i(x^t) = \mathbb{E}_{w_i^t \sim \mathbb{B}^{d}} \mathbb{E}_{\mathbf{z}_{-i}^t \sim \Pi_{j \neq i} \mathbb{S}^{d}}\left[c_i\left(x_i^t+A_i^t w_i^t, \hat{x}_{-i}^t\right)\right]$.  When $p$ is strongly convex, one can upper bound $\left\|\nabla_i \hat{c}_i(x)-\nabla_i c_i(x)\right\|$ by the maximum eigenvalue of $A_i^t$ and it suffices to take $\delta_t = 1$, which recovers the results in \cite{lin2021doubly}. However, when $p$ is convex and not strongly convex, 
one would need to carefully tune $\delta_t$ to control the bias from estimating the smoothed cost function. This ellipsoidal gradient estimator was first introduced by \cite{abernethy2008competing} for the case of $c_i$ being linear, and was then extended by \cite{hazan2014bandit} to the case of strongly convex costs. In learning for games, the ellipsoidal estimator was used in the case of strongly monotone games \citep{bravo2018bandit,lin2021doubly}.  

Based on the ellipsoidal gradient estimator, we present our uncoupled and convergent algorithm for monotone games under bandit feedback.

\begin{algorithm}[!ht]
\DontPrintSemicolon
  
  \KwInput{Learning rate $\eta_t$, parameter $\delta_t$, regularizer $h(\cdot), p(\cdot)$, constant $\kappa$}
  $x_i^1 = \argmin_{x_i \in \mathcal{X}_i} h(x_i)$\;
  \For{$t = 1, \ldots, T$}{
    Set $A_i^t = (\nabla^2 h(x_i^t) + \eta_t  (t+1) \nabla^2 p(x_i^t))^{-1/2}$\;
    Play $\hat{x}_i^t = x_i^t + \delta_t A_i^t z_i^t$, receive bandit feedback $c_i(\hat{x}_i, \hat{x}_{-i})$, sample $z_i^t \sim \mathbb{S}^d$\;
    Update gradient estimator $\hat{g}_i^t = \frac{d}{\delta_t} c_i(\hat{x}^t)(A_i^t)^{-1} z_i^t $\;
    Update the strategy 
    \begin{align}\label{eq:update_1}
        x^{t+1}_i =  \argmin_{x_i \in \mathcal{X}_i} \left\{\eta_t\left\langle x_i, \hat{g}^t_i\right\rangle + \eta_t \kappa (t+1) D_p(x_i, x_i^t) + D_h(x_i, x_i^t)\right\} 
    \end{align}
  }
\caption{Algorithm}
\label{alg}
\end{algorithm}

\paragraph{Implementation}
Notice that solving Equation (\ref{eq:update_1}) is equivalent to solving a convex but potentially non-smooth optimization problem. Certain sets $\mathcal{X} \subseteq \mathbb{R}^d$, including the cases when $\mathcal{X}$ is the strategy space of a normal-form game or an extensive-form game, can be solved by proximal Newton algorithm provably in $O(\log^2(1/\epsilon))$ iterations \citep{farina2022near}. When such guarantees are not required, one could accommodate other optimization methods in solving (\ref{eq:update_1}). Our experiment section provides more details.

The choice of $p$ and $h$ is game-dependent. For example, when $c_i(x) = x^2$ and the action set is on the positive half line, we can use the negative log function as our self-concordant barrier function $h$ and take $p = x$.

\section{No-regret Convergence to Nash Equilibrium}
In this section, we present our main results on the last-iterate convergence to the Nash equilibrium. We show that Algorithm \ref{alg} converges to the Nash equilibrium in monotone, strongly monotone, and linear games. Such convergence is no-regret, meaning that the individual regret of each player is sublinear.

For notational simplicity, we present the results in a perfect bandit feedback model, where player $i$ observes exactly $c_i(x^t)$. The discussion of noisy bandit feedback, where player $i$ observes $c_i(x^t) + \epsilon_i^t$, with $\epsilon_i^t$ be a zero-mean noise, is deferred to the appendix (Theorem \ref{thm:noisy_feedback}). 


\subsection{Perfect Bandit Feedback}

The following theorem describes the last-iterate convergence rate (in expectation) for convex and strongly convex loss under perfect bandit feedback. 
\begin{restatable}{thm}{nashperfetfeedback}\label{thm:nash_perfet_feedback}
    Take $\eta_t =  \begin{cases} 
      \frac{1}{2dt^{3/4}} & \mu = 0 \\
      \frac{1}{2dt^{1/2}} & \mu > 0 \,,
   \end{cases}$, $\delta_t = \begin{cases} 
      \frac{1}{t^{1/4}} & \mu = 0 \\
      1 & \mu > 0 \,.
   \end{cases}$. With Algorithm \ref{alg}, we have
    \begin{align*}
        \EE\left[\sum_{i \in \mathcal{N}} D_p\left(x_i^\ast, x_i^{T+1}\right)\right] 
        \leq \ & 
        \begin{cases} 
     O \left(\frac{nd\nu\log(T)}{\kappa T^{1/4}} + \frac{n\zeta dB}{T^{3/4}} + \frac{n B  L}{\kappa\sqrt{T}} + \frac{ndC_p}{T^{1/4}} + \frac{nd\log(T)}{\kappa T^{1/4}}  + \frac{\sqrt{n}B^2 L \log(T) }{\kappa T^{1/4}}\right)    & \mu = 0 \\
      O \left(\frac{nd\nu\log(T)}{\kappa\sqrt{T}} + \frac{nd\zeta B}{T} + \frac{n B L}{\kappa\sqrt{T}}  + \frac{ndC_p}{\sqrt{T}} + \frac{nd \log(T)}{\kappa\sqrt{T}}  + \frac{BL \log(T)}{\mu \kappa\sqrt{T} }\right)  & \mu > 0 \,,
   \end{cases} \,.
    \end{align*}
\end{restatable}

In the case of the monotone games, \cite{bravo2018bandit} showed an asymptotic convergence to Nash equilibrium. To the best of our knowledge, Theorem \ref{thm:nash_perfet_feedback} is the first result on the last-iterate convergence rate for monotone games. For strongly monotone games, \cite{bravo2018bandit} first gave a $O(T^{-1/3})$ last-iterate convergence rate, which was later improved to $O(T^{-1/2})$ by \cite{lin2021doubly}. 

While we defer the proof to the appendix, we discuss the main ideas of deriving the results. By the update rule, we can obtain the inequality
\begin{align}\label{eq:recursion}
\begin{split}
     & D_h\left(\omega_i, x_i^{t+1}\right) + \eta_t \kappa(t+1) D_p\left(\omega_i, x_i^{t+1}\right)\\
     \leq \ & D_h\left(\omega_i, x_i^{t}\right) + \eta_t \kappa(t+1)  D_p\left(\omega_i, x_i^t\right)+ \eta_t\left\langle  \nabla_i c_i\left(x^t\right), \omega_i- x_i^{t}\right\rangle + \eta_t \cdot \text{residual terms} \,, 
\end{split}
\end{align}
where $\omega_i$ is a fixed point given.
When the game is strongly monotone, we can directly use strongly monotoneity and take $p$ to be the Euclidean norm to obtain a recursive relation similar to $\left\|\omega_i - x_i^{t+1}\right\|_2^2 \leq (1-\eta_t^2) \left\|\omega_i - x_i^{t+1}\right\|_2^2 + \text{residual terms}$. This amounts to applying this recursion and upper-binding the residual terms individually to obtain a last-iterate convergence. However, when the game is monotone but not strongly monotone, we will need a different approach. Notice that $G = \nabla c_i - \nabla p$ is a monotone operator. Using the property of Bregman divergence, we have
\begin{align*}
    \left\langle G(x) - G(x^\prime), x^\prime - x\right\rangle 
     \leq \ & - \sum_{i \in \mathcal{N}}\left(D_p\left(x_i, x_i^\prime\right) + D_p\left(x_i^\prime, x_i\right)\right) \,.
\end{align*}
We then sum (\ref{eq:recursion}) and leverage the combination of two regularizations, which obtain
\begin{align*}
    &  \eta_T \kappa(T+1) \sum_{i \in \mathcal{N}} D_p\left(\omega_i, x_i^{T+1}\right) \\
    \leq \ & \sum_{i \in \mathcal{N}}D_h\left(\omega_i, x_i^{1}\right) + \kappa\sum_{i \in \mathcal{N}} D_p\left(\omega_i, x_i^{1}\right) +\sum^{T}_{t=1} \sum_{i \in \mathcal{N}}  \eta_t \left\langle \nabla_i c_i(\omega), \omega_i- x_i^{t}\right\rangle \\
    & \ + \sum^{T}_{t=1} \sum_{i \in \mathcal{N}} \eta_t \left\langle \hat{g}_i^t - \nabla_i c_i\left(x^t\right), \omega_i- x_i^{t}\right\rangle + \sum^{T}_{t=1} \sum_{i \in \mathcal{N}} \eta_t \left\langle \hat{g}_i^t, x_i^{t} - x_i^{t+1}\right\rangle \,.
\end{align*}
Now it suffices to properly choose a fixed point $\omega_i$ such that both the first term $\sum_{i \in \mathcal{N}}D_h\left(\omega_i, x_i^{1}\right)$ and the third term $\sum^{T}_{t=1} \sum_{i \in \mathcal{N}}  \eta_t \left\langle \nabla_i c_i(\omega), \omega_i- x_i^{t}\right\rangle$ are bounded. When $\omega_i$ is the Nash equilibrium $x_i^\ast$, the third term can be upper bounded trivially using the monotoneity of $c_i$, while it does not imply a bounded first term. Therefore, we set $\omega_i = x_i^\ast$ when the first term can be bounded. Otherwise, we set it to a close enough point to $x_i^\ast$, such that the first term can be bounded and the third term is bounded through a more careful calculation. 

\paragraph{High probability result}
In the case of strongly monotone game, our results show that the $O(T^{-1/4})$ last-iterate convergence rate holds with high probability. This is the first high probability result for last-iterate convergence in strongly monotone games. 
\begin{restatable}{thm}{highprob}
    \label{thm:high_prob_concentration}
With a probability of at least $1 - \log(T)\delta$, $\delta \leq e^{-1}$, and with Algorithm \ref{alg}, we have $\sum_{i \in \mathcal{N}}D_p\left(x_i^\ast, x_i^{T+1}\right)
    \leq  O \left(\frac{nd\nu\log(T)}{\sqrt{T}} + \frac{nd\zeta B}{T}+\frac{n BL}{\sqrt{T}} + \frac{n d C_p}{\sqrt{T}}+ \frac{nd\log(T)}{\sqrt{T}}+ \frac{dBL\log(T)}{\mu\sqrt{T}} + \frac{nBd^2 \log^2(1/\delta)\log(T)}{\min\{\sqrt{\mu},\mu\}\sqrt{T}}\right)$.

\end{restatable}

\subsection{Individual Low Regret}
Beyond the fast convergence to Nash equilibrium, our algorithm also ensures each player with a sublinear regret when playing against other players. The sublinear regret convergence is a desirable property as the players could be self-interested in general, and want to ensure their return even when other players are not adhering to the protocol. The low regret property remains true for players that are potentially adversarial, despite the convergence to Nash equilibrium no longer holds in that case.

For player $i$, and a sequence of actions $\{\hat{x}_i^t\}_{t=1}^T$, define the individual regret as the cumulative expected difference between the costs received and the cost of playing the hindsight optimal action. That is, $ \sum^T_{t=1} \EE\left[c_i\left(\hat{x}_i^t, x_{-i}^t\right) - c_i\left(\omega_i, x_{-i}^t\right) \right] $, where $\{x_{-i}^t\}_{t=1}^T$ is a fixed sequence of actions of other players. The following theorem shows a guarantee of the individual regret of each player. 
\begin{restatable}{thm}{individual}\label{thm:individual_regret}
    Take $\eta_t =  \begin{cases} 
      \frac{1}{2dt^{3/4}} & \mu = 0 \\
      \frac{1}{2dt^{1/2}} & \mu > 0 \,,
   \end{cases}$, $\delta_t = \begin{cases} 
      \frac{1}{t^{1/4}} & \mu = 0 \\
      1 & \mu > 0 \,,
   \end{cases}$. For a fixed $\omega_i \in \mathcal{X}_i$, a fixed sequence of $\{x_{-i}^t\}^T_{t=1}$, and with Algorithm \ref{alg}, we have
    \begin{align*}
         \sum^T_{t=1} \EE\left[c_i\left(\hat{x}_i^t, x_{-i}^t\right) - c_i\left(\omega_i, x_{-i}^t\right) \right] = \  \begin{cases}
         O \left( \nu dT^{3/4}\log(T) + G\sqrt{T}+ \ell_i \sqrt{n} B T^{3/4} \right) & \mu = 0 \\
        O \left( \nu d\sqrt{T}\log(T) + G\sqrt{T}+ \frac{nB\ell_i\sqrt{T} }{\mu}\right)& \mu >  0
         \end{cases} \,.
    \end{align*}
\end{restatable}
Our result matches the $\sqrt{T}$ regret bound for strongly monotone games \citep{lin2021doubly}, but applies to monotone games as well. 

\paragraph{Implication on social welfare} By designing the algorithm to be no-regret, we can also show that the social welfare attained by the algorithm also converges to the optimal value.  

The social welfare for a joint action $x$ is defined as $\mathrm{SW}(x) = \sum_{i \in \mathcal{N}} c_i(x)$. We let $\mathrm{OPT} = \min_x \mathrm{SW}(x)$ to denote the optimal social welfare. 
\begin{defn}[\citealt{roughgarden2015intrinsic,syrgkanis2015fast}]
A game is $(C_1, C_2)$-smooth, $C_1 > 0$, $C_2 < 1$, if there exists a strategy $x^\prime$, such that for any $x \in \mathcal{N}$, $\sum_{i \in \mathcal{N}} c_i(x_i^\prime, x_{-i}) \leq C_1\mathrm{OPT} + C_2 \mathrm{SW}(x) $.
\end{defn}

We have the following proposition which shows that the social welfare converges to optimal welfare on average. 
\begin{restatable}{prop}{social}\label{prop:social_welfare}
    With $\eta_t = \frac{1}{2dt^{3/4}}, \delta_t = \frac{1}{t^{1/4}}$, and suppose every player employ Algorithm \ref{alg}, we have $\frac{1}{T}\sum^T_{t=1} \EE\left[\mathrm{SW}(\hat{x})\right] =   O \left(\frac{C_1\mathrm{OPT} }{(1 - C_2)}+ \frac{n\nu d\log(T)}{(1 - C_2) T^{1/4}} +  \frac{\sqrt{n} B \sum_{i \in \mathcal{N}}\ell_i }{(1 - C_2) T^{1/4}} \right)$.
\end{restatable}

\subsection{Special Case: Linear Cost Function}\label{sec:linear}
When $c_i$ is linear, there does not exist a $p$ that is convex while making $c_i-\kappa p$ convex. Algorithm \ref{alg} therefore does not apply to the linear case.
This coincides with our intuition that the landscape $c_i$ does not provide enough curvature information for the algorithm to utilize.

To extend the algorithm to the linear case, we modify line 6 of Algorithm \ref{alg} as $$x^{t+1}_i =  \argmin_{x_i \in \mathcal{X}_i} \left\{\eta_t\left\langle x_i, \hat{g}^t_i\right\rangle + \eta_t  \tau(t+1) D_p(x_i, x_i^t) + D_h(x_i, x_i^t)\right\} \,.$$ The idea is to first show the convergence of $x^T$ to a game with the cost $c_i(x) + \tau p(x)$. With this regularized game, we choose $p$ to be a strongly monotone function and measure the convergence in terms of the gap function $\langle c_i(x), x_i - x^\ast \rangle$. By carefully controlling $\tau$, we obtain the following result.

\begin{restatable}{thm}{linear}
    \label{thm:linear}
    With $\eta_t = \frac{1}{2d\sqrt{t}}$, $\tau = \frac{1}{T^{1/6}}$, $G_p = \sup_x \|\nabla p(x)\|$ and Algorithm \ref{alg}, we have 
    \begin{align*}
        & \EE\left[\sum_{i\in \mathcal{N}}\left\langle \nabla_i c_i\left(x^T\right), x_i^T - x_i^\ast\right\rangle\right] \\
        \leq \ & \Tilde{O} \left(\frac{BG_p + \sqrt{d(BL +G)(n\nu + nBL + nd^2)}}{T^{1/6}}+\frac{\sqrt{dBL(BL +G)}}{\sqrt{\mu}T^{1/6}}+\frac{\sqrt{dnC_p(BL +G)}}{ \sqrt{\mu}T^{1/4}}\right)\,,
    \end{align*}
    
\end{restatable}

Similar regularization techniques have been used in the analysis of the zero-sum game \citep{cen2021fast,cai2023uncoupled}.
Our result matches the last-iterate convergence for zero-sum matrix game \citep{cai2023uncoupled}, which is a class of games with linear cost functions. However, our result is more general as it applies to multi-player linear games with monotone and compact action sets (while previous works only apply to a simplex action set). It remains open to how games with linear cost functions could be effectively learned and whether the convergence rate could be improved.




    


\section{Application to Time-varying Game}

In this section, we further apply Algorithm \ref{alg} to games that evolve over time. A time-varying game $\mathcal{G}_t$ is a game where the cost function $c_i^t(\cdot)$, $i \in \mathcal{N}$ depends on $t$. The game $\mathcal{G}_t$ is not revealed to the players before choosing their actions $x_t$. We assume that $\mathcal{G}_t$ satisfies Assumption \ref{asmp:game} for every $t$.

Such evolving games have applications in Kelly's auction and power control, where the cost function may change as time-dependent values change, such as channel gains. While the changes of $\mathcal{G}_t$ can be random, we discuss two cases here, 1) when $\mathcal{G}_t$ converges to a static game $\mathcal{G}$ in $o(T)$ time, and 2) when the variation path of the Nash equilibrium, $\sum^T_{t=1}\|x_i^{t+1, \ast} - x_i^{t, \ast}\|$ is bounded in $o(T)$.

\paragraph{Converging monotone game}
Let $\mathcal{G}_t$ denote the game formed by the costs $\{c_i^t(\cdot)\}_{i \in \mathcal{N}}$, and $\mathcal{G}$ be the game formed by the costs $\{c_i(\cdot)\}_{i \in \mathcal{N}}$. Suppose $\mathcal{G}_t$ converges to $\mathcal{G}$, and let $x^\ast$ be the set of Nash equilibrium of the game $\mathcal{G}$. The cost function $c_i^t$ converges to some cost function $c_i$ in $o(T)$ time. The following theorem shows the last iterate convergence to $x^\ast$.
\begin{restatable}{thm}{converging}\label{thm:converging_monotone}
    With $\sum^T_{t=1} \sum_{i \in \mathcal{N}} \max_x \|\nabla_i c_i(x) - \nabla_i c_i^t(x)\|_2 = T^{\alpha}$, take $\eta_t = \frac{1}{2dt^{3/4}}$, $\delta_t = \frac{1}{t^{1/4}}$, and under Algorithm \ref{alg}, we have $\EE\left[\sum_{i \in \mathcal{N}} D_p\left(x_i^\ast, x_i^{T+1}\right)\right] 
        \leq  O \left(\frac{nd\nu\log(T)}{\kappa T^{1/4}} + \frac{n\zeta dB}{T^{3/4}} + \frac{n B  L}{\kappa\sqrt{T}}+ \frac{ndC_p}{T^{1/4}} \right.$ $\left. + \frac{nd\log(T)}{\kappa T^{1/4}}  + \frac{\sqrt{n}B^2 L \log(T) }{\kappa T^{1/4}} + \frac{B}{T^{1/4 - \alpha}}\right)  $.
\end{restatable}
For monotone games, \cite{duvocelle2023multiagent} showed an asymptotic last-iterate convergence rate. To the best of our knowledge, Theorem \ref{thm:converging_monotone} is the first last-iterate convergence rate for the class of converging monotone game. 

\paragraph{Evolving game and equilibrium tracking}
We now discuss the case where $\mathcal{G}_t$ does not necessarily converge to a game $\mathcal{G}$, but the cumulative changes of the equilibrium are bounded. We use the variation path $ V_i(T) = \sum_{t \in [T]} \left\|x_i^{t+1, \ast}- x_i^{t, \ast}\right\|$ to track the cumulative changes of equilibrium. In this case, the last-iterate convergence is meaningless, and the convergence is measured in terms of the average gap. 
Because of this, the algorithm is slightly modified and updates with $ x^{t+1}_i =  \argmin_{x_i \in \mathcal{X}_i} \left\{\eta_t\left\langle x_i, \hat{g}^t_i\right\rangle + D_h(x_i, x_i^t)\right\} $.
\begin{restatable}{thm}{tracking}\label{thm:equ_tracking}
    Assume $  V_i(T) \leq T^{\varphi}$, $\varphi \in [0,1]$. Take $\eta_t = \frac{1}{2dt^{\frac{(1-\varphi)}{3}}}$, $\delta_t = \frac{1}{t^{1/2}}$, and under Algorithm \ref{alg}, we have $\frac{1}{T}\sum^T_{t=1} \sum_{i \in \mathcal{N}}\left\langle \nabla_i c_i^t\left(\hat{x}_i^t, \hat{x}_{-i}^t\right), \hat{x}_i^t - x_i^{t, \ast}\right\rangle = \ \tilde{O} \Big(\frac{n\nu d +L n^{3/2} B^2 + nG}{T^{\frac{2(1-\varphi)}{3}}} + \frac{n}{T^{\frac{9}{8}-\frac{(4\varphi +5)^2}{72}}}\Big)$.
\end{restatable}

In the case of a strongly monotone game, \cite{duvocelle2023multiagent} gave a result of $T^{\varphi/5-1/5 }$ and \cite{yan2023fast} gave a result of $T^{\varphi/3-2/3}$. In comparison, Theorem \ref{thm:equ_tracking} extends the study to monotone games, and improves the result to $O \left(\max\{T^{2\varphi/3-2/3 }, T^{(4\varphi + 5)^2/72-9/8}\}\right)$.

\section{Experiment}
In this section, we provide a numerical evaluation of our proposed algorithm in three static games. We repeat each experiment with $5$ different random seeds. We ran all experiments with a 10-core CPU, with 32 GB memory. We set $\eta_t = \frac{1}{\sqrt{t+1}}$, and $\delta_t = 0.001$. 

We present the results of the following example games described below. More results with other parameters can be found in the Appendix \ref{sec:moreexp}.

\paragraph{Cournot competition} In this Cournot duopoly model, $n$ players compete with constant marginal costs, each having individual constant price intercepts and slopes. We model the game with $5$ players, where the margin cost is $40$, price intercept is $[30, 50, 30, 50, 30]$, and the price slope is $[50, 30, 50, 30, 50]$.

\paragraph{Zero-sum matrix game} In this zero-sum matrix game, the two players aim to solve the bilinear problem $\min_x \max_y x^\top A y$. We set this matrix $A$ to be $[1, 2], [3, 4]$.

\paragraph{monotone zero-sum matrix game} In this monotone version of the zero-sum matrix game, we regularize the game by the regularizer $x^2 + y^2$. 

Algorithm \ref{alg} is evaluated against two baseline methods: online mirror descent and gradient descent, with exact gradient, or estimated gradient (bandit feedback). We set the learning rate $\eta$ to be $0.01$ in both zero-sum matrix games and monotone zero-sum matrix games, and $0.09$ in Cournot competition. 

Figure \ref{fig:exp} summarizes our experimental findings, where our algorithm attains comparable performance to online mirror descent and gradient descent with full information. We also compare our algorithm to gradient descent with an estimated gradient, using the same ellipsoidal gradient estimator. However, apart from the zero-sum matrix game, we find the baseline algorithm performs too poorly to be compared. 
\begin{figure}
    \centering
    \includegraphics[width=0.32\textwidth]{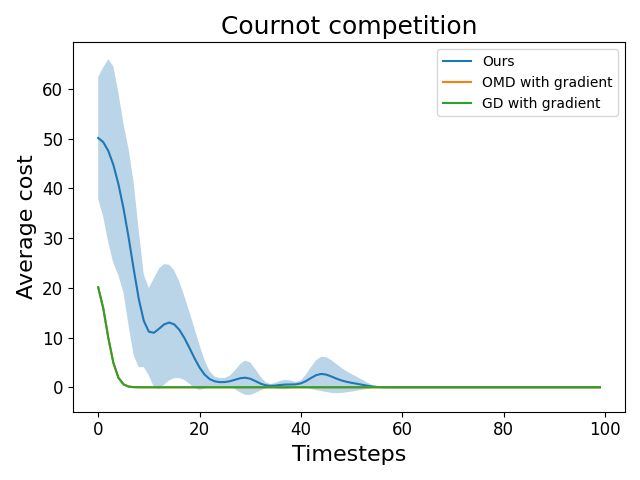}
    \includegraphics[width=0.32\textwidth]{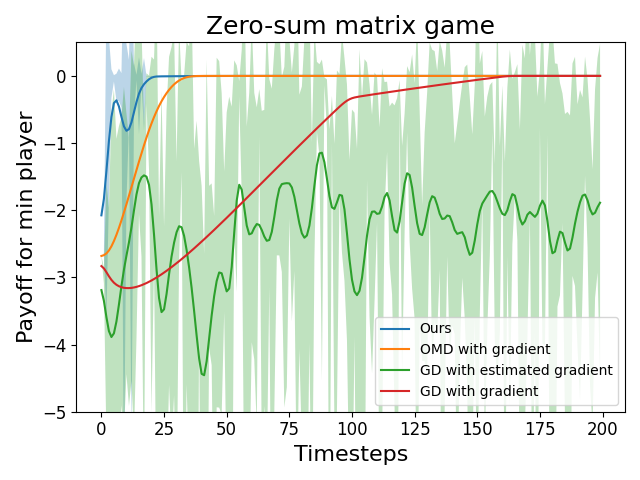}
    \includegraphics[width=0.32\textwidth]{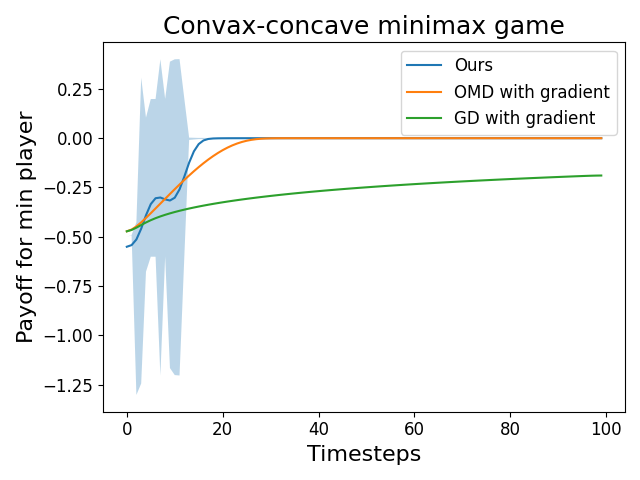}
    \vspace{-.3em}\caption{Experiment on Cournot competition, zero-sum two-player minimax game, and monotone-concave game.}
    \label{fig:exp}
\end{figure}

\section{Conclusion}

In this work, we present a mirror-descent-based algorithm that converges in $O(T^{-1/4})$ in general monotone and smooth games under bandit feedback and strongly uncoupled dynamics. Our algorithm is no-regret, and the result can be improved to $O(T^{-1/2})$ in the case of strongly-monotone games. To our best knowledge, this is the first uncoupled and convergent algorithm in general monotone games under bandit feedback. We then extend our results to time-varying monotone games and present the first result of $O(T^{-1/4})$ for converging games and the improved result of $O \left(\max\{T^{2\varphi/3-2/3 }, T^{(4\varphi + 5)^2/72-9/8}\}\right)$ for equilibrium tracking. We further verify the effectiveness of our algorithm with empirical evaluations. 

\newpage
\bibliography{ref}

\appendix


\newpage

\section{More Example Games}
\begin{example}[Extensive form game (EFG)]
    EFGs are games on a directed tree. At terminal nodes denoted as $z \in \mathcal{Z}$, each player $i \in \mathcal{N}$ incurs a cost $c_i(z)$ based on a function $c_i: \mathcal{Z} \rightarrow \mathbb{R}$. The action set of each player, $\mathcal{X}_i$, is represented through a sequence-form polytope known as $\mathcal{X}_i$ \cite{koller1996efficient}. Considering the probability $p(z)$ of reaching a terminal node $z \in \mathcal{Z}$, the cost for player $i$ is expressed as $c_i(x):=\sum_{z \in \mathcal{Z}} p(z) c_i(z) \prod_{j \in \mathcal{N}} x_j\left[\sigma_{j, z}\right]$. Here, $x=\left(x_1, \ldots, x_n\right) \in \prod_{j\in \mathcal{N}} \mathcal{X}_j$ signifies the joint strategy profile, and $x_j\left[\sigma{j, z}\right]$ denotes the probability mass assigned to the last sequence $\sigma_{j, z}$ encountered by player $j$ before reaching $z$. The smoothness and concavity of utilities directly arise from multilinearity.
\end{example}

\begin{example}[monotone potential game]
    A game is called a potential game if there exists a potential function $\Phi: \mathcal{X} \rightarrow\mathbb{R}$, such that, $c_i(x_i, x_{-i}) - c_i(x_i^\prime, x_{-i}) = \Phi(x_i, x_{-i}) - \Phi(x_i^\prime, x_{-i})$, for all $i \in \mathcal{N}$. If $\Phi$ is continuous, differentiable, smooth, and monotone in $x_i$, then the game satisfies Assumption \ref{asmp:game}. For example, a non-atomic congestion game satisfies Assumption \ref{asmp:game}, as shown in Proposition 1 and 2 of \cite{chen2016generalized}.
\end{example}

\newpage
\section{Proof of Theorem \ref{thm:nash_perfet_feedback}}
\nashperfetfeedback*

\begin{proof}
    
    We now upper bound the terms in Lemma \ref{lem:recur}. 
    
    When $\mu = 0$, taking expectation conditioned on $x^t$, we have $\EE\left[\left\|A_i^t \hat{g}_i^t\right\|^2 \mid x^t\right] = \frac{d^2}{\delta_t^2}\EE\left[c_i(\hat{x}^t)^2 \|z_i^t\|^2 \mid x^t\right] \leq \frac{d^2}{\delta_t^2}$. By Lemma \ref{lem:grad}, and the choice $\eta_t = \frac{1}{2d\sqrt{t}}$, we have
    \begin{align*}
        \sum^{T}_{t=1}\eta_t\sum_{i \in \mathcal{N}}\EE\left[\left\langle \hat{g}_i^t, x_i^{t} - x_i^{t+1}\right\rangle\right]
        \leq \ &  \sum^{T}_{t=1}\eta_t^2 \sum_{i \in \mathcal{N}} \EE\left[\left\|A_i^t \hat{g}_i^t\right\|^2\right] \leq nd^2  \sum^{T}_{t=1}\frac{\eta_t^2}{\delta_t^2}\,.
    \end{align*}
    By the definition of $\hat{c}_i$, 
    \begin{align*}
        & \sum_{i \in \mathcal{N}} \sum^{T}_{t=1}\eta_t\EE\left[\left\langle \hat{g}_i^t - \nabla_i c_i\left(x^t\right), \omega_i- x_i^{t}\right\rangle \mid x^t\right]\\
        = \ & \sum_{i \in \mathcal{N}} \sum^{T}_{t=1}\eta_t\EE\left[\left\langle \nabla_i \hat{c}_i(x^t) - \nabla_i c_i\left(x^t\right), \omega_i - x_i^{t}\right\rangle \mid x^t\right] \\
        = \ &  \sum_{i \in \mathcal{N}} \sum^{T}_{t=1}\eta_t\EE \left[\EE_{w_i \sim \mathbb{B}^{d}} \mathbb{E}_{\mathbf{z}_{-i} \sim \Pi_{j \neq i} \mathbb{S}^{d}}\left\langle \nabla_ic_i\left(x_i^t+\delta_t A_i^t w_i, \hat{x}_{-i}^t\right) - \nabla_i c_i\left(x^t\right), \omega_i - x_i^{t}\right\rangle \mid x^t\right] \\
        \leq \ &  B\sum_{i \in \mathcal{N}} \sum^{T}_{t=1}\eta_t\EE \left[\EE_{w_i \sim \mathbb{B}^{d}} \mathbb{E}_{\mathbf{z}_{-i} \sim \Pi_{j \neq i} \mathbb{S}^{d}}\left\| \nabla_ic_i\left(x_i^t+\delta_t A_i^t w_i, \hat{x}_{-i}^t\right) - \nabla_i c_i\left(x^t\right)\right\| \mid x^t\right] 
    \end{align*}
    By the smoothness of $c_i$, 
    \begin{align*}
        & \mathbb{E}_{w_i \sim \mathbb{B}^{d}} \mathbb{E}_{\mathbf{z}_{-i} \sim \Pi_{j \neq i} \mathbb{S}^{d}}\left[ \left\|\nabla_ic_i\left(x_i^t+\delta_t A_i^t w_i, \hat{x}_{-i}^t\right) - \nabla_i c_i\left(x^t\right)\right\| \right]\\
        \leq \ & \ell_i \EE_{w_i \sim \mathbb{B}^{d}} \mathbb{E}_{\mathbf{z}_{-i} \sim \Pi_{j \neq i} \mathbb{S}^{d}}\left[\sqrt{\delta_t^2\left\|A_i w_i\right\|^2+\delta_t^2\sum_{j \neq i}\left\|A_j z_j\right\|^2}\right] \,.
    \end{align*}
    Since $p$ is monotone, $\nabla^2 p(x)$ is positive semi-definite, and $A_i^t \preceq (\nabla^2 h(x_i))^{-1/2}$. For $\bar{x}_i^t = x_i^t+
    A_i^t w_i^t$. Define $\|v\|_x = \sqrt{v^\top \nabla^2 h(x) v}$, we have $\|\bar{x}_i^t - x_i^t\|_{x_i} \leq \|\omega_i^t\| \leq 1$, and $\bar{x}_i^t \in W(x_i^t)$, where $W(x_i) = \{x^\prime_i \in \mathbb{R}^d, \|x^\prime_i - x_i\|_{x_i} \leq 1\}$ is the Dikin ellipsoid. Since $W(x_i) \subseteq \mathcal{X}_i, \forall x_i \in \intr(\mathcal{X}_i)$, we can upper bound $\left\|A_i w_i\right\|^2$ by $B^2$, the diameter of the set $\mathcal{X}_i$. Hence $\|\nabla_i \hat{c}_i(x^t) - \nabla_i c_i\left(x^t\right)\| \leq \ell_i \delta_t \sqrt{n}B$. By Lemma \ref{lem:grad_bias} 
    \begin{align*}
        \sum_{i \in \mathcal{N}} \sum^{T}_{t=1}\eta_t\EE\left[\left\langle \hat{g}_i^t - \nabla_i c_i\left(x^t\right), \omega_i- x_i^{t}\right\rangle \mid x^t\right]
        = \ &  \sum_{i \in \mathcal{N}} \sum^{T}_{t=1}\eta_t\EE\left[\left\langle \nabla_i \hat{c}_i\left(x^t\right) - \nabla_i c_i\left(x^t\right), \omega_i- x_i^{t}\right\rangle \mid x^t\right]\\
        \leq \ &  \sum_{i \in \mathcal{N}} \sum^{T}_{t=1}\eta_t \EE\left[\left\| \nabla_i \hat{c}_i\left(x^t\right)  - \nabla_i c_i\left(x^t\right)\right\|\left\|\omega_i- x_i^{t}\right\| \mid x^t\right]\\
        \leq \ & \sqrt{n}B^2 \sum_{i \in \mathcal{N}}\ell_i \sum^T_{t=1}\eta_t \delta_t\,.
    \end{align*}

    When $\mu > 0$, we set $\delta = 1$. Then, taking expectation conditioned on $x^t$, we have $\EE\left[\left\|A_i^t \hat{g}_i^t\right\|^2 \mid x^t\right] = d^2\EE\left[c_i(\hat{x}^t)^2 \|z_i^t\|^2 \mid x^t\right] \leq d^2$. By Lemma \ref{lem:grad}, and the choice $\eta_t = \frac{1}{2d\sqrt{t}}$, we have
    \begin{align*}
        \sum^{T}_{t=1}\eta_t\sum_{i \in \mathcal{N}}\EE\left[\left\langle \hat{g}_i^t, x_i^{t} - x_i^{t+1}\right\rangle\right]
        \leq \ &  \sum^{T}_{t=1}\eta_t^2 \sum_{i \in \mathcal{N}} \EE\left[\left\|A_i^t \hat{g}_i^t\right\|^2\right] \leq nd^2  \sum^{T}_{t=1}\eta_t^2\,.
    \end{align*}

    By Lemma \ref{lem:grad_bias}, for any $\omega_i \in \mathcal{X}_i$, we have
    \begin{align*}
        \sum_{i \in \mathcal{N}} \sum^{T}_{t=1}\eta_t\EE\left[\left\langle \hat{g}_i^t - \nabla_i c_i\left(x^t\right), \omega_i- x_i^{t}\right\rangle \mid x^t\right]
        = \ & \sum_{i \in \mathcal{N}} \sum^{T}_{t=1}\eta_t\EE\left[\left\langle \nabla_i \hat{c}_i(x^t) - \nabla_i c_i\left(x^t\right), \omega_i - x_i^{t}\right\rangle \mid x^t\right] \\
        \leq \ &  \sum_{i \in \mathcal{N}} B\ell_i\sum^{T}_{t=1}\eta_t\EE\left[\sum_{j \in \mathcal{N}}\left(\sigma_{\max}\left(A_j^t\right)^2\right) \mid x^t\right]\\
        \leq \ & \sum_{i \in \mathcal{N}} B\ell_i \sum^T_{t=1} \frac{1}{\mu  (t+1)} \\
        \leq \ & \frac{B\sum_{i \in \mathcal{N}} \ell_i}{\mu   } \sum^T_{t=1} \frac{1}{(t+1)} \,.
    \end{align*}
    where the third inequality is by $\nabla^2 h(x)$ being positive definite, and $\nabla^2 p(x) \geq \mu I$.

    Let $L = \sum_{i \in \mathcal{N}}\ell_i $.  When $\mu = 0$, combing and rearranging the terms, we have
    \begin{align*}
        & \EE\left[\sum_{i \in \mathcal{N}} D_p\left(x_i^\ast, x_i^{T+1}\right)\right] \\
        \leq \ &  O \left(\frac{n\nu\log(T)}{\kappa\eta_T T} + \frac{n\zeta B}{\eta_T T^{3/2}}+ \frac{n B L}{\kappa\sqrt{T}} + \frac{ n }{ \kappa\sqrt{T}}+ \frac{nC_p}{\eta_T T}+ \frac{nd^2}{ \kappa \eta_T T}  \sum^{T}_{t=1}\frac{\eta_t^2}{\delta_t^2} + \frac{ \sqrt{n}B^2 L \sum^T_{t=1}\eta_t\delta_t}{\kappa\eta_T T}\right) \,.
    \end{align*}

    Take $\eta_t = \frac{1}{2dt^{3/4}}$, $\delta_t = \frac{1}{t^{1/4}}$, then $\sum^{T}_{t=1}\frac{\eta_t^2}{\delta_t^2} = O\left(\sum^{T}_{t=1}\frac{1}{t}\right) = O(\log(T))$, and $\sum^T_{t=1}\eta_t\delta_t  = O\left(\sum^{T}_{t=1}\frac{1}{t}\right) = O(\log(T))$.
    Hence, we have
    \begin{align*}
        \EE\left[\sum_{i \in \mathcal{N}} D_p\left(x_i^\ast, x_i^{T+1}\right)\right] 
        \leq \ & O \left(\frac{nd\nu\log(T)}{\kappa T^{1/4}} + \frac{n\zeta dB}{T^{3/4}} + \frac{n B  L}{\kappa\sqrt{T}} + \frac{ndC_p}{T^{1/4}} + \frac{nd\log(T)}{\kappa T^{1/4}}  + \frac{\sqrt{n}B^2 L \log(T) }{\kappa T^{1/4}}\right)  \,.
    \end{align*}

    When $\mu > 0$, combing and rearranging the terms, we have
    \begin{align*}
        & \EE\left[\sum_{i \in \mathcal{N}} D_p\left(x_i^\ast, x_i^{T+1}\right)\right] \\
        \leq \ &  O \left(\frac{n\nu\log(T)}{\kappa\eta_T T} + \frac{n\zeta B}{\eta_T T^{3/2}}+ \frac{n B L}{\kappa\sqrt{T}} + \frac{ n }{ \sqrt{T}}+\frac{nC_p}{\eta_T T}+ \frac{nd^2}{\kappa\eta_T T}  \sum^{T}_{t=1}\eta_t^2 + \frac{BL \log(T)}{\mu \kappa\eta_T T  }\right)\,.
    \end{align*}
    Take $\eta_t = \frac{1}{2d t^{1/2}}$, we have
    \begin{align*}
        \EE\left[\sum_{i \in \mathcal{N}} D_p\left(x_i^\ast, x_i^{T+1}\right)\right] 
        \leq \ &  O \left(\frac{nd\nu\log(T)}{\kappa\sqrt{T}} + \frac{nd\zeta B}{T} + \frac{n B L}{\kappa\sqrt{T}}  + \frac{ndC_p}{\sqrt{T}} + \frac{nd \log(T)}{\kappa\sqrt{T}}  + \frac{BL \log(T)}{\mu \kappa\sqrt{T} }\right) \,.
    \end{align*}

\end{proof}
\newpage
\section{Proof of Theorem \ref{thm:individual_regret}}
\individual*
\begin{proof}
    Define the smoothed version of $c_i$ as $\hat{c}_i(x)=\mathbb{E}_{w_i \sim \mathbb{B}^{d}}\left[c_i\left(x_i+ \delta A_i w_i, x_{-i}\right)\right]$.
    Then, we decompose as 
    \begin{align*}
        \sum^T_{t=1} c_i\left(\hat{x}_i^t, x_{-i}^t\right) - c_i\left(\omega_i, x_{-i}^t\right) 
        = \ & \sum^T_{t=1} \left(\hat{c}_i\left(x_i^t, x_{-i}^t\right) - \hat{c}_i\left(\omega_i, x_{-i}^t\right)\right) + \sum^T_{t=1} \left(c_i\left(x_i^t, x_{-i}^t\right) - \hat{c}_i\left(x_i^t, x_{-i}^t\right)\right) \\
        & \ + \sum^T_{t=1} \left(\hat{c}_i\left(\omega_i, x_{-i}^t\right) - c_i\left(\omega_i, x_{-i}^t\right)\right) + \sum^T_{t=1} \left(c_i\left(\hat{x}_i^t, x_{-i}^t\right) - c_i\left(x_i^t, x_{-i}^t\right)\right)\,.
    \end{align*}

    For the first term, recall that by the update rule, we have,
    \begin{align*}
         & D_h\left(\omega_i, x_i^{t+1}\right) + \eta_t \kappa (t+1) D_p\left(\omega_i, x_i^{t+1}\right)\\
         = \ & D_h\left(\omega_i, x_i^{t}\right) + \eta_t \kappa (t+1) D_p\left(\omega_i, x_i^{t}\right)+ \eta_t\left\langle \nabla \hat{c}_i\left(x^t\right), \omega_i- x_i^{t}\right\rangle + \eta_t\left\langle \hat{g}_i^t - \nabla \hat{c}_i\left(x^t\right), \omega_i- x_i^{t}\right\rangle\\
         & \ + \eta_t\left\langle \hat{g}_i^t, x_i^{t} - x_i^{t+1}\right\rangle \\
         = \ & D_h\left(\omega_i, x_i^{t}\right) + \eta_t \kappa (t+1) D_p\left(\omega_i, x_i^{t}\right)+ \eta_t\left\langle \nabla \hat{c}_i\left(x^t\right) - \kappa \nabla p(x_i^t), \omega_i- x_i^{t}\right\rangle + \eta_t\left\langle \hat{g}_i^t - \nabla \hat{c}_i\left(x^t\right) + \kappa \nabla p(x_i^t), \omega_i- x_i^{t}\right\rangle\\
         & \ + \eta_t\left\langle \hat{g}_i^t, x_i^{t} - x_i^{t+1}\right\rangle \,.
    \end{align*}

    By Lemma \ref{lem:grad_bias}, for any $\omega_i \in \mathcal{X}_i$, we have
    \begin{align*}
        \EE\left[\left\langle \hat{g}_i^t - \nabla \hat{c}_i\left(x^t\right)+ \kappa \nabla p(x_i^t), \omega_i- x_i^{t}\right\rangle \mid x^t\right]
        = \ & \EE\left[\left\langle \nabla_i \hat{c}_i(x^t) - \nabla_i \hat{c}_i\left(x^t\right)+ \kappa \nabla p(x_i^t), \omega_i - x_i^{t}\right\rangle \mid x^t\right] \\
        = \ & \EE\left[ \kappa \left\langle  \nabla p(x_i^t), \omega_i - x_i^{t}\right\rangle \mid x^t\right] \\
        =\ & \EE\left[  \kappa p(\omega_i) - \kappa p(x_i^t) - \kappa D_p (\omega_i, x_i^t) \mid x^t\right]\,,
    \end{align*}
    where the last equality follows from the definition of Bregman divergence.

    Therefore, 
    \begin{align*}
         & \EE\left[ D_h\left(\omega_i, x_i^{t+1}\right) + \eta_t \kappa (t+1) D_p\left(\omega_i, x_i^{t+1}\right)\right]\\
         = \ & \EE\left[D_h\left(\omega_i, x_i^{t}\right) + \eta_t \kappa t D_p\left(\omega_i, x_i^{t}\right)+ \eta_t\left\langle \nabla \hat{c}_i\left(x^t\right) - \kappa  \nabla p(x_i^t), \omega_i- x_i^{t}\right\rangle \right]+ \eta_t\EE\left[\kappa   p(\omega_i) - \kappa p(x_i^t)  \right] \\
         & \ + \EE\left[\eta_t\left\langle \hat{g}_i^t, x_i^{t} - x_i^{t+1}\right\rangle \right]\,.
    \end{align*}

    By the monotoneity of $\hat{c}_i\left(x^t\right) - \kappa p(x_i^t)$, we have
    \begin{align*}
        \left\langle \nabla \hat{c}_i\left(x^t\right) - \kappa \nabla p(x_i^t), \omega_i- x_i^{t}\right\rangle
        \leq \ & \left(\hat{c}_i\left(\omega_i, x_{-i}^t\right) - \kappa  p(\omega_i)\right) -  \left(\hat{c}_i\left(x_i^t, x_{-i}^t\right) - \kappa p(x_i^t)\right)\,.
    \end{align*}

    Hence
    \begin{align*}
        & \EE\left[\hat{c}_i\left(x_i^t, x_{-i}^t\right) - \hat{c}_i\left(\omega_i, x_{-i}^t\right) \right]\\
        \leq \ & \EE\left[\frac{\left(D_h\left(\omega_i, x_i^{t}\right) - D_h\left(\omega_i, x_i^{t+1}\right)\right)}{\eta_t} +  \kappa \left(t D_p\left(\omega_i, x_i^{t}\right) - (t+1)D_p\left(\omega_i, x_i^{t+1}\right)\right) + \left\langle \hat{g}_i^t, x_i^{t} - x_i^{t+1}\right\rangle\right]\,.
    \end{align*}

    When $\mu = 0$, by Lemma \ref{lem:grad}, we have $\EE\left[\left\langle \hat{g}_i^t, x_i^{t} - x_i^{t+1}\right\rangle\right]
        \leq \eta_t\EE\left[\left\|A_i^t \hat{g}_i^t\right\|^2\right] $.
     Taking expectation conditioned on $x^t$, we have $\EE\left[\left\|A_i^t \hat{g}_i^t\right\|^2 \mid x^t\right] = \frac{d^2}{\delta_t^2}\EE\left[\tilde{c}_i(\hat{x}^t)^2 \|z_i^t\|^2 \mid x^t\right] \leq \frac{d^2}{\delta_t^2}$, and therefore  $\EE\left[\left\langle \hat{g}_i^t, x_i^{t} - x_i^{t+1}\right\rangle\right] \leq \frac{\eta_t d^2}{\delta_t^2}$.

     Taking summation over $T$, and take $\eta_t = \frac{1}{2dt^{3/4}}$, $\delta_t = \frac{1}{t^{1/4}}$ we have
     \begin{align*}
         \sum^T_{t=1} \EE\left[\hat{c}_i\left(x^t_i, x_{-i}^t\right) - \hat{c}_i\left(\omega_i, x_{-i}^t\right)\right]
         \leq \ & dT^{3/4} \EE\left[D_h\left(\omega_i, x_i^1\right)\right] + \kappa \EE\left[D_p\left(\omega_i, x_i^1\right)\right] + \sum^T_{t=1} \frac{\eta_t d^2}{\delta^2} \\
         \leq \ &  O \left( dT^{3/4} \EE\left[D_h\left(\omega_i, x_i^1\right) \right] + \kappa C_p + T^{3/4}\right) \,,
     \end{align*}
     as we assumed $D_p(x_i, x_i^\prime)$ is bounded for any $x_i, x_i^\prime$.

     When $\mu > 0$, taking expectation conditioned on $x^t$, we have $\EE\left[\left\|A_i^t \hat{g}_i^t\right\|^2 \mid x^t\right] = d^2\EE\left[c_i(\hat{x}^t)^2 \|z_i^t\|^2 \mid x^t\right] \leq d^2$. By Lemma \ref{lem:grad}, and the choice $\eta_t = \frac{1}{2d\sqrt{t}}$, we have
    \begin{align*}
        \sum^{T}_{t=1}\sum_{i \in \mathcal{N}}\EE\left[\left\langle \hat{g}_i^t, x_i^{t} - x_i^{t+1}\right\rangle\right]
        \leq \ &  \sum^{T}_{t=1}\eta_t \sum_{i \in \mathcal{N}} \EE\left[\left\|A_i^t \hat{g}_i^t\right\|^2\right] \leq nd^2  \sum^{T}_{t=1}\eta_t = nd^2 \sqrt{T}\,.
    \end{align*}

    Taking summation over $T$, and take $\eta_t = \frac{1}{2dt^{1/2}}$,  we have
     \begin{align*}
         \sum^T_{t=1} \EE\left[\hat{c}_i\left(x^t_i, x_{-i}^t\right) - \hat{c}_i\left(\omega_i, x_{-i}^t\right)\right]
         \leq \ & dT^{1/2} \EE\left[D_h\left(\omega_i, x_i^1\right)\right] + \kappa \EE\left[D_p\left(\omega_i, x_i^1\right)\right] + nd^2 \sqrt{T} \,,
     \end{align*}
     as we assumed $D_p(x_i, x_i^\prime)$ is bounded for any $x_i, x_i^\prime$.

     Define $\pi_x(y)=\inf \left\{t \geq 0: x+\frac{1}{t}(y-x) \in \mathcal{X}_i\right\}$. Notice that  $x_i^1(x) = \argmin_{x_i \in \mathcal{X}_i} h(x_i)$, so $D_h(\omega_i, x_i^1) = h(\omega_i) - h(x_i^1)$.
     \begin{itemize}
         \item If $\pi_{x_i^1}(\omega_i) \leq 1-\frac{1}{\sqrt{T}}$, then by Lemma \ref{lem:self_decord_upper}, $D_h(\omega_i, x_i^1) = \nu\log(T)$, and $\sum^T_{t=1} \EE\left[\hat{c}_i\left(x^t_i, x_{-i}^t\right) - \hat{c}_i\left(\omega_i, x_{-i}^t\right)\right] = O \left( \nu dT^{3/4}\log(T)\right)$. 
         \item Otherwise, we find a point $\omega_i^\prime$ such that $\|\omega_i^\prime - \omega_i\| = O(1/\sqrt{T})$ and $\pi_{x_i^1}(\omega_i^\prime) \leq 1-\frac{1}{\sqrt{T}}$. Then $D_h(\omega_i^\prime, x_i^1) = \nu \log(T)$, 
         \begin{align*}
             \hat{c}_i\left(\omega_i^\prime, x_{-i}^t\right) - \hat{c}_i\left(\omega_i, x_{-i}^t\right) \leq \left\langle \nabla_i \hat{c}_i\left(\omega_i^\prime, x_{-i}^t\right), \omega_i^\prime - \omega_i\right\rangle
             \leq \|\nabla_i \hat{c}_i\left(\omega_i^\prime, x_{-i}^t\right)\|\|\omega_i^\prime - \omega_i\| \leq \frac{\max_x\|\nabla_i c_i\left(x\right)\|}{\sqrt{T}} \,.
         \end{align*}
         Therefore, $\sum^T_{t=1} \EE\left[\hat{c}_i\left(x^t_i, x_{-i}^t\right) - \hat{c}_i\left(\omega_i, x_{-i}^t\right)\right] = O \left( \nu dT^{3/4}\log(T) + \max_x\|\nabla_i c_i\left(x\right)\|\sqrt{T}\right)$.
         
     \end{itemize}

    For the second term, by Jensen's inequality, we have
     \begin{align*}
        \hat{c}_i\left(x_i^t, x_{-i}^t\right)
         \mathbb{E}_{w_i^t \sim \mathbb{B}^{d}}\left[c_i\left(x_i^t+ \delta_t A_i^t w_i^t, x_{-i}^t\right)\right]
         \geq \ & c_i\left(\mathbb{E}_{w_i^t \sim \mathbb{B}^{d}}x_i^t+ \delta_t A_i^t w_i^t, x_{-i}^t\right)
         = \ c_i\left(x_i^t, x_{-i}^t\right) \,.
     \end{align*}
     
     Therefore, we have $\sum^T_{t=1} \left(c_i\left(x_i^t, x_{-i}^t\right) - \hat{c}_i\left(x_i^t, x_{-i}^t\right)\right) = 0$.
     
    When $\mu  = 0$, by the definition of $\hat{c}_i$ and the smoothness of $c_i$, 
    \begin{align*}
        \|\nabla_i \hat{c}_i(x^t) - \nabla_i c_i\left(x^t\right)\|
        = \ &\left\|\mathbb{E}_{w_i \sim \mathbb{B}^{d}} \mathbb{E}_{\mathbf{z}_{-i} \sim \Pi_{j \neq i} \mathbb{S}^{d}}\left[ \nabla_ic_i\left(x_i^t+\delta_t A_i^t w_i, \hat{x}_{-i}^t\right) - \nabla_i c_i\left(x^t\right)\right]\right\| \\
        \leq \ & \ell_i \sqrt{\EE_{w_i \sim \mathbb{B}^{d}} \mathbb{E}_{\mathbf{z}_{-i} \sim \Pi_{j \neq i} \mathbb{S}^{d}}\left[\delta_t^2\left\|\delta_t A_i w_i\right\|^2+\delta_t^2\sum_{j \neq i}\left\|A_j z_j\right\|^2\right]} \,.
    \end{align*}
    Since $p$ is monotone, $\nabla^2 p(x)$ is positive semi-definite, and $A_i^t \preceq (\nabla^2 h(x_i))^{-1/2}$. For $\bar{x}_i^t = x_i^t+
    A_i^t w_i^t$. Define $\|v\|_x = \sqrt{v^\top \nabla^2 h(x) v}$, we have $\|\bar{x}_i^t - x_i^t\|_{x_i} \leq \|\omega_i^t\| \leq 1$, and $\bar{x}_i^t \in W(x_i^t)$, where $W(x) = \{x^\prime_i \in \mathbb{R}^d, \|x^\prime_i - x_i\|_{x_i} \leq 1\}$ is the Dikin ellipsoid. Since $W(x_i) \subseteq \mathcal{X}_i, \forall x_i \in \intr(\mathcal{X}_i)$, we can upper bound $\left\|A_i w_i\right\|^2$ by $B^2$, the diameter of the set $\mathcal{X}_i$. Hence $\|\nabla_i \hat{c}_i(x^t) - \nabla_i c_i\left(x^t\right)\| \leq \ell_i \delta_t \sqrt{n}B$.
    
    Therefore, for the third term, we have
    \begin{align*}
         \sum^T_{t=1} \EE\left[\hat{c}_i\left(\omega_i, x_{-i}^t\right) - c_i\left(\omega_i, x_{-i}^t\right)\right] \leq  O\left(\sum^T_{t=1} \ell_i \delta_t \sqrt{n} B  \right) \,.
    \end{align*}
    Similarly, for the fourth term, we have $\sum^T_{t=1} \EE\left[c_i\left(\hat{x}_i^t, x_{-i}^t\right) - c_i\left(x_i^t, x_{-i}^t\right)\right] \leq O \left(\sum^T_{t=1} \ell_i \delta_t \sqrt{n} B  \right)$.

    When $\mu > 0$, by Lemma \ref{lem:grad_bias}, for any $\omega_i \in \mathcal{X}_i$, we have
    \begin{align*}
       \left\| \nabla_i \hat{c}_i(x^t) - \nabla_i c_i\left(x^t\right)\right\|
        \leq \  \ell_i\sqrt{\sum_{j \in \mathcal{N}}\left(\sigma_{\max}\left(A_j^t\right)^2\right)}
        \leq \ \frac{n \ell_i }{\sqrt{\mu  (t+1)}} \,.
    \end{align*}
    where the second inequality is by $\nabla^2 h(x)$ being positive definite, and $\nabla^2 p(x) \geq \mu I$. 

    Therefore, for the third term, we have
    \begin{align*}
         \sum^T_{t=1} \EE\left[\hat{c}_i\left(\omega_i, x_{-i}^t\right) - c_i\left(\omega_i, x_{-i}^t\right)\right] \leq  O\left(\frac{nB\ell_i\sqrt{T} }{\mu} \right) \,.
    \end{align*}
    Similarly, for the fourth term, we have $\sum^T_{t=1} \EE\left[c_i\left(\hat{x}_i^t, x_{-i}^t\right) - c_i\left(x_i^t, x_{-i}^t\right)\right] \leq O\left(\frac{nB\ell_i\sqrt{T} }{\mu}  \right) $.
    
    When $\mu = 0$, with $\delta_t = \frac{1}{t^{1/4}}$, we have the regret as 
    \begin{align*}
        \sum^T_{t=1} \EE\left[c_i\left(\hat{x}_i^t, x_{-i}^t\right) - c_i\left(\omega_i, x_{-i}^t\right) \right] = \  O \left( \nu dT^{3/4}\log(T) + \max_x\|\nabla_i c_i\left(x\right)\|\sqrt{T}+ \ell_i \sqrt{n} B T^{3/4}  \right) \,.
    \end{align*}

     When $\mu > 0$, we have the regret as
     \begin{align*}
        \sum^T_{t=1} \EE\left[c_i\left(\hat{x}_i^t, x_{-i}^t\right) - c_i\left(\omega_i, x_{-i}^t\right) \right] = \  O \left( \nu dT^{1/2}\log(T) + \max_x\|\nabla_i c_i\left(x\right)\|\sqrt{T}+ \frac{nB\ell_i\sqrt{T} }{\mu} \right) \,.
    \end{align*}

     Combining the terms yields the final result. 
\end{proof}
\newpage
\section{Proof of Theorem \ref{thm:noisy_feedback}}
We now consider the case where every player receive $\Tilde{c}_i(x^t) = c_i(x^t) + \epsilon_i^t$, where $\EE[\epsilon_i^t\mid \hat{x}^t] = 0$, and $\|\epsilon_i^t\|^2 \leq \sigma$. The following theorem describes the last-iterate convergence rate (in expectation) for monotone and strongly monotone loss under noisy bandit feedback. 
\begin{restatable}{thm}{nashnoisyfeedback}\label{thm:noisy_feedback}

With $\eta_t = \frac{1}{4d^2(1+\sigma)t^{3/4}}$, $\delta_t = \frac{1}{t^{1/4}}$
    \begin{align*}
        \sum_{i \in \mathcal{N}} D_p\left(x_i^\ast, x_i^{T+1}\right) 
        \leq \ &   O \left(\frac{n\nu d^2(1+\sigma)\log(T)}{\kappa T^{1/4}} + \frac{n\zeta d^2(1+\sigma) B}{T^{3/4}} +\frac{n d^2(1+\sigma) C_p}{T^{1/4}} \right.\\
        & \ + \left.\frac{\sqrt{n}B^2 L \log(T)}{\kappa T^{1/4}} +\frac{nd\log(T)}{\kappa (1+\sigma)^2T^{1/4}}\right)\,.
    \end{align*}
    
\end{restatable}
\begin{proof}
    Similar to Theorem \ref{thm:nash_perfet_feedback}, with Lemma \ref{lem:recur}, we have
    \begin{align*}
        \sum_{i \in \mathcal{N}} D_p\left(x_i^\ast, x_i^{T+1}\right) 
        \leq \ &  O \left(\frac{n\nu\log(T)}{\kappa\eta_T T} + \frac{n\zeta B}{\eta_T T^{3/2}}\right)+  O \left( \frac{n B  \sum_{i \in \mathcal{N}} \ell_i}{\kappa T^{3/2}} + \frac{ n }{\kappa T^{3/2}}\right) \frac{\sum^T_{t=1}\eta_t}{\eta_T}+ O \left(\frac{n C_p}{\eta_T T}\right) \\
        &+\frac{ \sqrt{n}B^2 L \sum^T_{t=1}\eta_t\delta_t}{\eta_T \kappa (T+1)} + \frac{1}{\eta_T\kappa(T+1)}\sum_{i \in \mathcal{N}} \sum^{T}_{t=1}\eta_t\left\langle \hat{g}_i^t, x_i^{t} - x_i^{t+1}\right\rangle\,.
    \end{align*}
     Taking expectation conditioned on $x^t$, we have $\EE\left[\left\|A_i^t \hat{g}_i^t\right\|^2 \mid x^t\right] = \frac{d^2}{\delta_t^2}\EE\left[\tilde{c}_i(\hat{x}^t)^2 \|z_i^t\|^2 \mid x^t\right] \leq \frac{d^2}{\delta_t^2}(2+2\sigma)$. By Lemma \ref{lem:grad}, and the choice $\eta_t = \frac{1}{4d^2(1+\sigma)t^{3/4}}$, we have
    \begin{align*}
        \sum^{T}_{t=1}\eta_t\sum_{i \in \mathcal{N}}\EE\left[\left\langle \hat{g}_i^t, x_i^{t} - x_i^{t+1}\right\rangle\right]
        \leq \ &  \sum^{T}_{t=1}\eta_t^2 \sum_{i \in \mathcal{N}} \EE\left[\left\|A_i^t \hat{g}_i^t\right\|^2\right] \leq nd^2\sum^{T}_{t=1}\frac{\eta_t^2}{\delta_t^2} = \frac{n\log(T)}{16(1+\sigma)^2} \,.
    \end{align*}

    Combining everything, we have
    \begin{align*}
        & \sum_{i \in \mathcal{N}} D_p\left(x_i^\ast, x_i^{T+1}\right) \\
        \leq \ &  O \left(\frac{n\nu d^2(1+\sigma)\log(T)}{\kappa T^{1/4}} + \frac{n\zeta d^2(1+\sigma) B}{T^{3/4}} +\frac{n d^2(1+\sigma) C_p}{T^{1/4}} +\frac{\sqrt{n}B^2 L \log(T)}{\kappa T^{1/4}} +\frac{nd\log(T)}{\kappa (1+\sigma)^2T^{1/4}}\right)\,.
    \end{align*}
\end{proof}
\newpage
\section{Proof of Theorem \ref{thm:high_prob_concentration}}
\highprob*

\begin{proof}
    Lemma \ref{lem:recur}, we have 
    \begin{align*}
        & \sum_{i \in \mathcal{N}} D_p\left(x_i^\ast, x_i^{T+1}\right) \\
        \leq \ &  O \left(\frac{n\nu\log(T)}{\kappa \eta_T T} + \frac{n\zeta B}{\eta_T T^{3/2}}\right)+  O \left( \frac{n B  \sum_{i \in \mathcal{N}} \ell_i}{\kappa T^{3/2}} + \frac{ n }{ \kappa T^{3/2}}\right) \frac{\sum^T_{t=1}\eta_t}{\eta_T}+ O \left(\frac{n C_p}{\eta_T T}\right) \\
        &+ \frac{1}{\kappa \eta_T(T+1)}\sum_{i \in \mathcal{N}} \sum^{T}_{t=1}\eta_t\left\langle \hat{g}_i^t, x_i^{t} - x_i^{t+1}\right\rangle+ \frac{1}{\kappa \eta_T (T+1)} \sum^{T}_{t=1}\eta_t\sum_{i \in \mathcal{N}}\left\langle \hat{g}_i^t - \nabla_i c_i\left(x^t\right), \omega_i -  x_i^{t}\right\rangle \,.
    \end{align*}

    By Lemma \ref{lem:grad}, we have
    \begin{align*}
       \sum^{T}_{t=1}\eta_t\sum_{i \in \mathcal{N}}\left\langle \hat{g}_i^t, x_i^{t} - x_i^{t+1}\right\rangle
        \leq \ &  \sum^{T}_{t=1}\eta_t^2 \sum_{i \in \mathcal{N}} \left\|A_i^t \hat{g}_i^t\right\|^2\leq nd^2  \sum^{T}_{t=1}\eta_t^2\,.
    \end{align*}

    We then decompose the last term as 
    \begin{align*}
        \sum^{T}_{t=1}\eta_t\sum_{i \in \mathcal{N}}\left\langle \hat{g}_i^t - \nabla_i c_i\left(x^t\right), \omega_i -  x_i^{t}\right\rangle 
        = \ & \sum^T_{t=1}\eta_t\sum_{i \in \mathcal{N}}\left\langle g_i^t - \hat{c}_i^t(x_i^t), \omega_i- x_i^t\right\rangle + \sum_{i \in \mathcal{N}} \sum^{T}_{t=1}\eta_t\left\langle \nabla_i \hat{c}_i(x^t) - \nabla_i c_i\left(x^t\right), \omega_i - x_i^{t}\right\rangle \,.
    \end{align*}

   By Lemma \ref{lem:high_prob_concentration}, we have
\begin{align*}
   \sum^T_{t=1}\eta_t \left\langle g_i^t - \hat{c}_i^t(x_i^t), \omega_i- x_i^t\right\rangle \leq  O \left(\frac{Bd \log^2(1/\delta)\log(T)}{\min\{\sqrt{\mu},\mu\}}\right)\,,
\end{align*}
 with a probability of at least $1 - \log(T)\delta$, $\delta \leq e^{-1}$.

    By Lemma \ref{lem:grad_bias}, for any $\omega_i \in \mathcal{X}_i$, we have
    \begin{align*}
        \sum_{i \in \mathcal{N}} \sum^{T}_{t=1}\eta_t\left\langle \nabla_i \hat{c}_i(x^t) - \nabla_i c_i\left(x^t\right), \omega_i - x_i^{t}\right\rangle 
        \leq \ &  \sum_{i \in \mathcal{N}} B\ell_i\sum^{T}_{t=1}\eta_t\sum_{j \in \mathcal{N}}\left(\sigma_{\max}\left(A_j^t\right)^2\right) \mid x^t\\
        \leq \ & \sum_{i \in \mathcal{N}} B\ell_i \sum^T_{t=1} \frac{1}{\mu  (t+1)} \\
        \leq \ & \frac{B\sum_{i \in \mathcal{N}} \ell_i}{\mu   } \sum^T_{t=1} \frac{1}{(t+1)} \\
        \leq \ & \frac{BL\log(T)}{\mu}
    \end{align*}
    where the third inequality is by $\nabla^2 h(x)$ being positive definite, and $\nabla^2 p(x) \geq \mu I$. 

    Therefore, 
    \begin{align*}
        \sum^{T}_{t=1}\eta_t\sum_{i \in \mathcal{N}}\left\langle \hat{g}_i^t - \nabla_i c_i\left(x^t\right), \omega_i -  x_i^{t}\right\rangle \leq O \left( \frac{BL\log(T)}{\mu} +  \frac{nBd \log^2(1/\delta)\log(T)}{\min\{\sqrt{\mu},\mu\}}\right)\,.
    \end{align*}
    Combining the terms, and with $\eta_t = \frac{1}{2d\sqrt{t}}$, we have
    \begin{align*}
        & \sum_{i \in \mathcal{N}} D_p\left(x_i^\ast, x_i^{T+1}\right) \\
        \leq \ &  O \left(\frac{nd\nu\log(T)}{\kappa \sqrt{T}} + \frac{nd\zeta B}{T}+\frac{n BL}{\kappa \sqrt{T}} + \frac{n d C_p}{\sqrt{T}}+ \frac{nd\log(T)}{\kappa \sqrt{T}}+ \frac{dBL\log(T)}{\kappa \mu\sqrt{T}} +  \frac{nBd^2 \log^2(1/\delta)\log(T)}{\kappa \min\{\sqrt{\mu},\mu\}\sqrt{T}}\right) \,.
    \end{align*}
    
\end{proof}

\begin{lem}\label{lem:high_prob_concentration}
    With a probability of at least $1 - \log(T)\delta$, $\delta \leq e^{-1}$, we have
\begin{align*}
   \sum^T_{t=1}\eta_t\left\langle g_i^t - \hat{c}_i^t(x_i^t), \omega_i- x_i^t\right\rangle \leq  O \left(\frac{Bd \log^2(1/\delta)\log(T)}{\min\{\sqrt{\mu},\mu\}}\right)\,.
\end{align*}
\end{lem}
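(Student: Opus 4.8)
The plan is to prove the high-probability concentration bound in Lemma~\ref{lem:high_prob_concentration} by recognizing the sum $\sum_{t=1}^T \eta_t \langle g_i^t - \hat c_i^t(x_i^t), \omega_i - x_i^t\rangle$ as a martingale difference sequence (with respect to the filtration generated by the randomness $z_i^t$ up to time $t$), since $\mathbb{E}[\hat g_i^t \mid x^t] = \nabla_i \hat c_i(x^t)$ by the unbiasedness of the ellipsoidal estimator. The summand at step $t$ is $\xi_t := \eta_t \langle \hat g_i^t - \nabla_i\hat c_i(x^t), \omega_i - x_i^t\rangle$, which has conditional mean zero. The strategy is then to apply a Freedman-type (Bernstein-type) martingale concentration inequality, which requires control of (i) a uniform bound on $|\xi_t|$ and (ii) a bound on the predictable quadratic variation $\sum_t \mathbb{E}[\xi_t^2 \mid \mathcal{F}_{t-1}]$.

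First I would bound the increments. Using Cauchy--Schwarz, $|\xi_t| \le \eta_t \|\hat g_i^t - \nabla_i \hat c_i(x^t)\| \cdot \|\omega_i - x_i^t\| \le \eta_t B (\|\hat g_i^t\| + \|\nabla_i \hat c_i(x^t)\|)$. The estimator norm $\|\hat g_i^t\| = \frac{d}{\delta_t}|c_i(\hat x^t)|\,\|(A_i^t)^{-1} z_i^t\|$; since $c_i \in [0,1]$, $\|z_i^t\|=1$, and $(A_i^t)^{-1} = (\nabla^2 h(x_i^t) + \eta_t(t+1)\nabla^2 p(x_i^t))^{1/2}$, with $\delta_t$ as chosen ($\delta_t=1$ in the strongly monotone case where this lemma is applied), the eigenvalues of $(A_i^t)^{-1}$ are of order $\sqrt{\eta_t(t+1)\mu}$ plus the self-concordant part; a careful bound gives $\|\hat g_i^t\| = O(d \sqrt{\eta_t(t+1)}\,\sqrt{\mu} + \dots)$, and combined with $\eta_t = \Theta(1/(d\sqrt t))$ this yields $|\xi_t| = O(Bd/\sqrt{\mu})$ or similar, up to the $\min\{\sqrt\mu,\mu\}$ normalization appearing in the statement. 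Next I would bound the conditional variance: $\mathbb{E}[\xi_t^2\mid \mathcal{F}_{t-1}] \le \eta_t^2 B^2\, \mathbb{E}[\|\hat g_i^t\|^2 \mid x^t]$, and since $\mathbb{E}[\|A_i^t \hat g_i^t\|^2\mid x^t]\le d^2$ (as used repeatedly in the earlier proofs via Lemma~\ref{lem:grad}), together with $(A_i^t)^{-2} \preceq$ (eigenvalue bound) one gets $\mathbb{E}[\|\hat g_i^t\|^2\mid x^t] = O(d^2 \cdot \eta_t(t+1)\mu + \dots)$, so $\sum_t \mathbb{E}[\xi_t^2\mid\mathcal{F}_{t-1}] = O(B^2 d^2 \sum_t \eta_t^3 (t+1)\mu / \mu) = \tilde O(B^2 d^2/\mu)$ after plugging $\eta_t = \Theta(1/(d\sqrt t))$ and summing $\sum_t t^{-3/2}\cdot t = \sum_t t^{-1/2}$... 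I need to be careful here that the telescoping/summation actually yields only a $\log T$ factor and not a polynomial in $T$; this is exactly where the specific learning rate schedule matters.

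Then I would invoke Freedman's inequality: with probability at least $1-\delta$, $\sum_t \xi_t = O\!\left(\sqrt{V \log(1/\delta)} + R\log(1/\delta)\right)$ where $V$ is the variance bound and $R$ the increment bound. Plugging in $V = \tilde O(B^2 d^2/\mu)$ and $R = O(Bd/\sqrt\mu)$ (or the appropriate $\min\{\sqrt\mu,\mu\}$-scaled versions) gives a bound of order $\tilde O\!\left(\frac{Bd\log(1/\delta)}{\min\{\sqrt\mu,\mu\}}\right)$. The extra factor of $\log(1/\delta)^2$ and the outer $1-\log(T)\delta$ confidence come from a standard union-bound-over-dyadic-blocks (or peeling/stitching) argument needed because the variance bound itself only holds up to logarithmic factors and because we want a time-uniform statement; alternatively one applies Freedman once on the full horizon and absorbs the doubling into the $\log^2$.

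The main obstacle I anticipate is \emph{not} the martingale machinery itself but getting the dependence on $\mu$ and the schedule right: one must show that $\|\hat g_i^t\|$ and $\mathbb{E}\|\hat g_i^t\|^2$ are controlled in terms of the preconditioner $A_i^t$, whose conditioning degrades as $\eta_t(t+1)$ grows, and then verify that the summations $\sum_t \eta_t^2 \mathbb{E}\|\hat g_i^t\|^2$ and $\max_t \eta_t \|\hat g_i^t\|$ both collapse to $\tilde O(1/\min\{\sqrt\mu,\mu\})$ rather than something growing in $T$. This balancing is precisely why $\eta_t = \Theta(1/(d\sqrt t))$ and $\delta_t = 1$ are chosen, and it is the delicate bookkeeping step; the rest is a fairly mechanical application of a Bernstein-type martingale tail bound.
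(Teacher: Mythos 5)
Your plan is essentially the paper's own proof: the authors also treat $Z_t=\eta_t\langle \hat g_i^t-\nabla_i\hat c_i(x^t),\omega_i-x_i^t\rangle$ as a martingale difference sequence, bound $\max_t|Z_t|$ and the variance $\sum_t\mathrm{Var}[Z_t]$ through the spectrum of $A_i^t$ (using $\sigma_{\max}(A_i^t)^2\le 1/(\mu\eta_t(t+1))$ with $\eta_t=\Theta(1/(d\sqrt t))$), and then invoke a Bernstein/Freedman-type martingale inequality (Lemma 2 of \cite{bartlett2008high}), which is exactly where the $1-\log(T)\delta$ confidence and the $\log^2(1/\delta)$ factor come from. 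The only part you left open — the bookkeeping showing the range and variance terms collapse to $\tilde O(1/\min\{\sqrt\mu,\mu\})$ rather than growing with $T$ — is handled in the paper precisely by that eigenvalue bound on $A_i^t$, so your approach is correct and matches theirs.
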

\begin{proof}
Define $Z_t = \eta_t\left\langle g_i^t - \hat{c}_i^t(x_i^t), \omega_i- x_i^t\right\rangle$. $\Var[Z_t] \leq \eta^2(\omega_i- x_i^t)^\top \EE[ g_i^t(g_i^t)^\top](\omega_i- x_i^t)$. 
Then, with $\eta_t = \frac{1}{2d\sqrt{t}}$, 
\begin{align*}
    \max_t |Z_t| \leq \max_t \left\| \eta_t \left(g_i^t - \hat{c}_i^t(x_i^t)\right)\right\| \left\|\omega_i- x_i^t\right\| \leq O \left(  B d \max_t\|\eta_t(A_i^t)^{-1} z_i^t\| \right)\leq O \left(\max_t\frac{Bd}{\mu(t+1)} \right)\leq O\left(\frac{Bd}{\mu}\right) \,,
\end{align*}
where the third inequality is by the definition of $A_i^t$.

By the definition of gradient estimator, we have
\begin{align*}
    (g_i^t)^\top g_i^t \leq d^2\left((A_i^t)^{-1} z_i^t \right)^\top \left((A_i^t)^{-1} z_i^t \right) \leq \frac{d^2}{\mu\eta_t(t+1)} \,.
\end{align*}
Therefore, with $\eta_t = \frac{1}{2d\sqrt{t}}$
\begin{align*}
    (\omega_i- x_i^t)^\top \EE[ g_i^t(g_i^t)^\top](\omega_i- x_i^t)\leq \ & \frac{d^2\|\omega_i - x_i^t\|^2}{\mu\eta_t(t+1)} \leq \frac{ d^2B^2}{\mu\eta_t(t+1)} \leq \ \frac{dB^2}{\mu \sqrt{t}}\,.
\end{align*}
We have
\begin{align*}
    \sqrt{\sum^T_{t=1} \eta_t^2 (\omega_i- x_i^t)^\top \EE[ g_i^t(g_i^t)^\top](\omega_i- x_i^t)}
     \leq \ & \sqrt{\sum^T_{t=1} \frac{B^2}{d\mu t^{3/2}}} 
    \leq \ O \left(\frac{B \sqrt{\log(T)}}{\sqrt{d\mu}}\right) \,.
\end{align*}

Then, by Lemma 2 of \cite{bartlett2008high}, with a probability of at least $1 - \log(T)\delta$, $\delta \leq e^{-1}$,
\begin{align*}
   \sum^T_{t=1}\eta_t\left\langle g_i^t - \hat{c}_i^t(x_i^t), \omega_i- x_i^t\right\rangle 
   \leq \ & 2\max \left\{2\sqrt{\sum^T_{t=1} \Var[Z_t]}, \max_t |Z_t|\log(1/\delta)\right\} \\
   \leq \ & \max\left\{O \left(\frac{B \sqrt{\log(T)}}{\sqrt{d\mu}}\right), O\left(\frac{Bd\log(1/\delta)}{\mu}\right) \right\} \cdot \log(1/\delta) \\
   \leq \ &  O \left(\frac{Bd \log^2(1/\delta)\log(T)}{\min\{\sqrt{\mu},\mu\}}\right)\,.
\end{align*}
    
\end{proof}
\newpage
\section{Proof of Theorem \ref{thm:linear}}
\linear*
\begin{proof}
    We consider a regularized game with operator $\Tilde{F}(x) = [\Tilde{F}_i(x)]_{i \in \mathcal{N}}$, where $\Tilde{F}_i(x) = \nabla c_i(x) + \tau \nabla p(x_i)$, $\nabla p(x) = [\nabla_i p(x_i)]_{i \in \mathcal{N}}$. 
    Similar to Lemma \ref{lem:recur}, we have
    \begin{align*}
        & \sum_{i \in \mathcal{N}} D_p\left(x_i^\tau, x_i^{T+1}\right) \\
        \leq \ &  O \left(\frac{n\nu\log(T)}{\eta_T \tau T} + \frac{n\mu B}{\eta_T \tau T^{3/2}}\right)+  O \left( \frac{n B  \sum_{i \in \mathcal{N}} \ell_i}{\tau T^{3/2}} + \frac{ n }{\tau T^{3/2}}\right) \frac{\sum^T_{t=1}\eta_t}{\eta_T}+ O \left(\frac{n C_p}{\eta_T T}\right) \\
        &+ \frac{1}{\eta_T\tau (T+1)}\sum_{i \in \mathcal{N}} \sum^{T}_{t=1}\eta_t\left\langle \hat{g}_i^t, x_i^{t} - x_i^{t+1}\right\rangle+ \frac{1}{\eta_T \tau (T+1)} \sum^{T}_{t=1}\eta_t\sum_{i \in \mathcal{M}}\left\langle \hat{g}_i^t - \Tilde{F}_i\left(x^t\right), x_i^\tau- x_i^{t}\right\rangle \\
        &+ \frac{1}{\eta_T \tau (T+1)} \sum^{T}_{t=1}\eta_t\sum_{i \in \mathcal{N}\setminus\mathcal{M}}\left\langle \hat{g}_i^t - \Tilde{F}_i\left(x^t\right), \bar{x}_i- x_i^{t}\right\rangle\,.
    \end{align*}

    Taking expectation conditioned on $x^t$, we have $\EE\left[\left\|A_i^t \hat{g}_i^t\right\|^2 \mid x^t\right] = d^2\EE\left[c_i(\hat{x}^t)^2 \|z_i^t\|^2 \mid x^t\right] \leq d^2$. By Lemma \ref{lem:grad}, and the choice $\eta_t = \frac{1}{2d\sqrt{t}}$, we have
    \begin{align*}
        \sum^{T}_{t=1}\eta_t\sum_{i \in \mathcal{N}}\EE\left[\left\langle \hat{g}_i^t, x_i^{t} - x_i^{t+1}\right\rangle\right]
        \leq \ &  \sum^{T}_{t=1}\eta_t^2 \sum_{i \in \mathcal{N}} \EE\left[\left\|A_i^t \hat{g}_i^t\right\|^2\right] \leq nd^2  \sum^{T}_{t=1}\eta_t^2\,.
    \end{align*}

    By Lemma \ref{lem:grad_bias}, for any $\omega_i \in \mathcal{X}_i$, we have
    \begin{align*}
        \sum_{i \in \mathcal{N}} \sum^{T}_{t=1}\eta_t\EE\left[\left\langle \hat{g}_i^t - \nabla_i c_i\left(x^t\right), \omega_i- x_i^{t}\right\rangle \mid x^t\right]
        = \ & \sum_{i \in \mathcal{N}} \sum^{T}_{t=1}\eta_t\EE\left[\left\langle \nabla_i \hat{c}_i(x^t) - \nabla_i c_i\left(x^t\right), \omega_i - x_i^{t}\right\rangle \mid x^t\right] \\
        \leq \ & \sum_{i \in \mathcal{N}} \sum^{T}_{t=1}\eta_t\EE\left[\left\| \nabla_i \hat{c}_i(x^t) - \nabla_i c_i\left(x^t\right) \right\|\left\| \omega_i- x_i^{t}\right\| \mid x^t\right] \\
        \leq \ &  \sum_{i \in \mathcal{N}} B\ell_i\sum^{T}_{t=1}\eta_t\EE\left[\sum_{j \in \mathcal{N}}\left(\sigma_{\max}\left(A_j^t\right)^2\right) \mid x^t\right]\\
        \leq \ & \sum_{i \in \mathcal{N}} B\ell_i \sum^T_{t=1} \frac{1}{\mu  (t+1)} \\
        \leq \ & \frac{B\sum_{i \in \mathcal{N}} \ell_i}{\mu   } \sum^T_{t=1} \frac{1}{(t+1)} \,.
    \end{align*}
    where the third inequality is by $\nabla^2 h(x)$ being positive definite, and $\nabla^2 p(x) \geq \mu I$. 

    Combing and rearranging the terms, we have
    \begin{align*}
        & \EE\left[\sum_{i \in \mathcal{N}} D_p\left(x_i^\tau, x_i^{T+1}\right)\right] \\
        \leq \ &  O \left(\frac{n\nu\log(T)}{\eta_T \tau T} + \frac{n\zeta B}{\eta_T \tau T^{3/2}}\right)+  O \left( \frac{n B  \sum_{i \in \mathcal{N}} \ell_i}{\tau \sqrt{T}} + \frac{ n }{\tau \sqrt{T}}\right) + O \left(\frac{nC_p}{\eta_T T}\right) + O \left(\frac{nd^2}{\tau \eta_T T}  \sum^{T}_{t=1}\eta_t^2 + \frac{B\sum_{i \in \mathcal{N}} \ell_i}{\tau \mu \eta_T T  } \sum^T_{t=1} \frac{1}{t}\right) \,.
    \end{align*}

    Take $\eta_t = \frac{1}{2d\sqrt{t}}$, we have
    \begin{align*}
        & \EE\left[\sum_{i \in \mathcal{N}} D_p\left(x_i^\tau, x_i^{T+1}\right)\right]\\
        \leq \ & O \left(\frac{nd\nu\log(T)}{\tau \sqrt{T}} + \frac{nd\zeta B}{\tau T} + \frac{n B  \sum_{i \in \mathcal{N}} \ell_i}{\tau \sqrt{T}} + \frac{ n }{\tau \sqrt{T}}+ \frac{ndC_p}{\sqrt{T}} + \frac{nd\log(T)}{ \tau \sqrt{T}}  + \frac{dB\log(T)\sum_{i \in \mathcal{N}} \ell_i}{\tau \mu \sqrt{T}} \right) \,.
    \end{align*}
    We can decompose as
     \begin{align*}
       & \left\langle F\left(x^T\right), x^T - x^\ast\right\rangle\\
        = \ & \left\langle F\left(x^T\right), x^T - x^\tau\right\rangle + \left\langle F\left(x^T\right), x^\tau - x^\ast\right\rangle \\
        \leq \ & G\left\|x^T - x^\tau\right\| + \left\langle F\left(x^\tau\right) + \tau \nabla p(x^\tau), x^\tau - x^\ast\right\rangle + \left\langle F\left(x^T\right) - F\left(x^\tau\right), x^\tau - x^\ast\right\rangle + \tau B\left\|\nabla p(x^\tau)\right\|\\
        \leq \ &  \sum_{i \in \mathcal{N}}(B\ell_i +G)\left\|x^T_i - x^\tau\right\|+ \tau B\left\|\nabla p(x^\tau)\right\| \,.
    \end{align*}

    Since $\nabla^2 p(x) \succeq \mu I$, we have $\|x_i^\tau - x_i^T\|\leq \sqrt{D_p(x_i^\tau, x_i^T)}$. Let $G_p = \sup_x \|\nabla p(x)\|$, $L = \sum_{i \in \mathcal{N}}\ell_i$, we have
    \begin{align*}
        & \EE\left[\sum_{i\in \mathcal{N}}\left\langle \nabla_i c_i\left(x^T\right), x_i^T - x_i^\ast\right\rangle\right] \\
        \leq \ & O \left(\tau B G_p\right) + \Tilde{O} \left(\frac{\sqrt{d(BL +G)(n\nu + nBL + nd^2)}}{\sqrt{\tau} T^{1/4}}\right)+ \Tilde{O} \left(\frac{\sqrt{dBL(BL +G)}}{\sqrt{\tau\mu}T^{1/4}}\right)+ O \left(\frac{\sqrt{dnC_p(BL +G)}}{ \sqrt{\mu}T^{1/4}}\right) \\
        \leq \ & \Tilde{O} \left(\frac{BG_p + \sqrt{d(BL +G)(n\nu + nBL + nd^2)}}{T^{1/6}}\right)+ \Tilde{O} \left(\frac{\sqrt{dBL(BL +G)}}{\sqrt{\mu}T^{1/6}}\right)+ O \left(\frac{\sqrt{dnC_p(BL +G)}}{ \sqrt{\mu}T^{1/4}}\right)\,,
    \end{align*}
    where the last inequality is by taking $\tau = \frac{1}{T^{1/6}}$.
\end{proof}

\newpage
\section{Proof of Proposition \ref{prop:social_welfare}}
\social*
\begin{proof}
By Theorem \ref{thm:individual_regret}, we have
    \begin{align*}
        \sum^T_{t=1} \sum_{i \in \mathcal{N}}\EE\left[c_i\left(\hat{x}_i^t, \hat{x}_{-i}^t\right)\right]
        \leq \ &  \sum^T_{t=1} \sum_{i \in \mathcal{N}}\EE\left[ c_i\left(\omega_i, \hat{x}_{-i}^t\right)\right] +  O \left( n\nu dT^{3/4}\log(T) +  \sqrt{n} B T^{3/4}\sum_{i \in \mathcal{N}}\ell_i  \right)\\
        \leq \ & C_1\mathrm{OPT} \cdot T + C_2 \sum^T_{t=1} \EE\left[\mathrm{SW}(\hat{x})\right]+  O \left( n\nu dT^{3/4}\log(T) +  \sqrt{n} B T^{3/4}\sum_{i \in \mathcal{N}}\ell_i  \right) \,.
    \end{align*}
    As $\sum^T_{t=1} \sum_{i \in \mathcal{N}}\EE\left[c_i\left(\hat{x}_i^t, \hat{x}_{-i}^t\right)\right] = \EE\left[\mathrm{SW}(\hat{x})\right]$, we solve for $\EE\left[\mathrm{SW}(\hat{x})\right]$ and obtain 
    \begin{align*}
        \frac{1}{T}\sum^T_{t=1} \EE\left[\mathrm{SW}(\hat{x})\right] =   O \left(\frac{C_1\mathrm{OPT} }{(1 - C_2)}+ \frac{n\nu d\log(T)}{(1 - C_2) T^{1/4}} +  \frac{\sqrt{n} B \sum_{i \in \mathcal{N}}\ell_i }{(1 - C_2) T^{1/4}} \right) \,.
    \end{align*}
\end{proof}
\newpage
\section{Proof of Theorem \ref{thm:converging_monotone}}
\converging*

\begin{proof}

    Similar to Theorem \ref{thm:nash_perfet_feedback}, we have
    \begin{align*}
        & \sum_{i \in \mathcal{N}} D_p\left(x_i^\ast, x_i^{T+1}\right) \\
        \leq \ &  O \left(\frac{n\nu\log(T)}{\eta_T \kappa T} + \frac{n\zeta B}{\eta_T  T^{3/2}}\right)+  O \left( \frac{n B  \sum_{i \in \mathcal{N}} \ell_i}{\kappa T^{3/2}} + \frac{ n }{ \kappa T^{3/2}}\right) \frac{\sum^T_{t=1}\eta_t}{\eta_T}+ O \left(\frac{n C_p}{\eta_T T}\right) \\
        &+ \frac{1}{\kappa\eta_T(T+1)}\sum_{i \in \mathcal{N}} \sum^{T}_{t=1}\eta_t\left\langle \hat{g}_i^t, x_i^{t} - x_i^{t+1}\right\rangle+ \frac{1}{\kappa\eta_T (T+1)} \sum^{T}_{t=1}\eta_t\sum_{i \in \mathcal{M}}\left\langle \hat{g}_i^t - \nabla_i c_i^t\left(x^t\right), x_i^\ast- x_i^{t}\right\rangle \\
        &+ \frac{1}{\kappa\eta_T (T+1)} \sum^{T}_{t=1}\eta_t\sum_{i \in \mathcal{N}\setminus\mathcal{M}}\left\langle \hat{g}_i^t - \nabla_i c_i^t\left(x^t\right), \bar{x}_i- x_i^{t}\right\rangle + B \sum^T_{t=1}\Delta^t\,,
    \end{align*}
    where $\Delta^t = \sum_{i \in \mathcal{N}} \max_x \|\nabla_i c_i(x) - \nabla_i c_i^t(x)\|_2$.
    
    We now upper bound the remaining terms by discussing them by cases. 
    
    When $\mu = 0$, taking expectation conditioned on $x^t$, we have $\EE\left[\left\|A_i^t \hat{g}_i^t\right\|^2 \mid x^t\right] = \frac{d^2}{\delta_t^2}\EE\left[c_i^t(\hat{x}^t)^2 \|z_i^t\|^2 \mid x^t\right] \leq \frac{d^2}{\delta_t^2}$. By Lemma \ref{lem:grad}, and the choice $\eta_t = \frac{1}{2d\sqrt{t}}$, we have
    \begin{align*}
        \sum^{T}_{t=1}\eta_t\sum_{i \in \mathcal{N}}\EE\left[\left\langle \hat{g}_i^t, x_i^{t} - x_i^{t+1}\right\rangle\right]
        \leq \ &  \sum^{T}_{t=1}\eta_t^2 \sum_{i \in \mathcal{N}} \EE\left[\left\|A_i^t \hat{g}_i^t\right\|^2\right] \leq nd^2  \sum^{T}_{t=1}\frac{\eta_t^2}{\delta_t^2}\,.
    \end{align*}
    By the definition of $\hat{c}_i$, 
    \begin{align*}
        & \sum_{i \in \mathcal{N}} \sum^{T}_{t=1}\eta_t\EE\left[\left\langle \hat{g}_i^t - \nabla_i c_i^t\left(x^t\right), \omega_i- x_i^{t}\right\rangle \mid x^t\right]\\
        = \ & \sum_{i \in \mathcal{N}} \sum^{T}_{t=1}\eta_t\EE\left[\left\langle \nabla_i \hat{c}_i^t(x^t) - \nabla_i c_i^t\left(x^t\right), \omega_i - x_i^{t}\right\rangle \mid x^t\right] \\
        = \ &  \sum_{i \in \mathcal{N}} \sum^{T}_{t=1}\eta_t\EE \left[\EE_{w_i \sim \mathbb{B}^{d}} \mathbb{E}_{\mathbf{z}_{-i} \sim \Pi_{j \neq i} \mathbb{S}^{d}}\left\langle \nabla_ic_i^t\left(x_i^t+\delta_t A_i^t w_i, \hat{x}_{-i}^t\right) - \nabla_i c_i^t\left(x^t\right), \omega_i - x_i^{t}\right\rangle \mid x^t\right] \\
        \leq \ &  B\sum_{i \in \mathcal{N}} \sum^{T}_{t=1}\eta_t\EE \left[\EE_{w_i \sim \mathbb{B}^{d}} \mathbb{E}_{\mathbf{z}_{-i} \sim \Pi_{j \neq i} \mathbb{S}^{d}}\left\| \nabla_ic_i^t\left(x_i^t+\delta_t A_i^t w_i, \hat{x}_{-i}^t\right) - \nabla_i c_i^t\left(x^t\right)\right\| \mid x^t\right] 
    \end{align*}
    By the smoothness of $c_i^t$, 
    \begin{align*}
        & \mathbb{E}_{w_i \sim \mathbb{B}^{d}} \mathbb{E}_{\mathbf{z}_{-i} \sim \Pi_{j \neq i} \mathbb{S}^{d}}\left[ \left\|\nabla_ic_i^t\left(x_i^t+\delta_t A_i^t w_i, \hat{x}_{-i}^t\right) - \nabla_i c_i^t\left(x^t\right)\right\| \right]\\
        \leq \ & \ell_i \EE_{w_i \sim \mathbb{B}^{d}} \mathbb{E}_{\mathbf{z}_{-i} \sim \Pi_{j \neq i} \mathbb{S}^{d}}\left[\sqrt{\delta_t^2\left\|A_i w_i\right\|^2+\delta_t^2\sum_{j \neq i}\left\|A_j z_j\right\|^2}\right] \,.
    \end{align*}
    Since $p$ is monotone, $\nabla^2 p(x)$ is positive semi-definite, and $A_i^t \preceq (\nabla^2 h(x_i))^{-1/2}$. For $\bar{x}_i^t = x_i^t+
    A_i^t w_i^t$. Define $\|v\|_x = \sqrt{v^\top \nabla^2 h(x) v}$, we have $\|\bar{x}_i^t - x_i^t\|_{x_i} \leq \|\omega_i^t\| \leq 1$, and $\bar{x}_i^t \in W(x_i^t)$, where $W(x_i) = \{x^\prime_i \in \mathbb{R}^d, \|x^\prime_i - x_i\|_{x_i} \leq 1\}$ is the Dikin ellipsoid. Since $W(x_i) \subseteq \mathcal{X}_i, \forall x_i \in \intr(\mathcal{X}_i)$, we can upper bound $\left\|A_i w_i\right\|^2$ by $B^2$, the diameter of the set $\mathcal{X}_i$. Hence $\|\nabla_i \hat{c}_i(x^t) - \nabla_i c_i\left(x^t\right)\| \leq \ell_i \delta_t \sqrt{n}B$. By Lemma \ref{lem:grad_bias} 
    \begin{align*}
        \sum_{i \in \mathcal{N}} \sum^{T}_{t=1}\eta_t\EE\left[\left\langle \hat{g}_i^t - \nabla_i c_i^t\left(x^t\right), \omega_i- x_i^{t}\right\rangle \mid x^t\right]
        = \ &  \sum_{i \in \mathcal{N}} \sum^{T}_{t=1}\eta_t\EE\left[\left\langle \nabla_i \hat{c}_i^t\left(x^t\right) - \nabla_i c_i^t\left(x^t\right), \omega_i- x_i^{t}\right\rangle \mid x^t\right]\\
        \leq \ &  \sum_{i \in \mathcal{N}} \sum^{T}_{t=1}\eta_t \EE\left[\left\| \nabla_i \hat{c}_i^t\left(x^t\right)  - \nabla_i c_i^t\left(x^t\right)\right\|\left\|\omega_i- x_i^{t}\right\| \mid x^t\right]\\
        \leq \ & \sqrt{n}B^2 \sum_{i \in \mathcal{N}}\ell_i \sum^T_{t=1}\eta_t \delta_t\,.
    \end{align*}
    
    Let $L = \sum_{i \in \mathcal{N}}\ell_i $.  When $\mu = 0$, combing and rearranging the terms, we have
    \begin{align*}
        & \EE\left[\sum_{i \in \mathcal{N}} D_p\left(x_i^\ast, x_i^{T+1}\right)\right] \\
        \leq \ &  O \left(\frac{n\nu\log(T)}{\kappa\eta_T T} + \frac{n\zeta B}{\eta_T T^{3/2}}+\frac{n B L}{\kappa\sqrt{T}} + \frac{ n }{ \kappa\sqrt{T}}+\frac{nC_p}{\eta_T T} +\frac{nd^2}{ \kappa \eta_T T}  \sum^{T}_{t=1}\frac{\eta_t^2}{\delta_t^2} + \frac{ \sqrt{n}B^2 L \sum^T_{t=1}\eta_t\delta_t}{\kappa\eta_T T} + \frac{B\sum^T_{t=1}\Delta^t}{\eta_T T}\right) \,.
    \end{align*}

    Take $\eta_t = \frac{1}{2dt^{3/4}}$, $\delta_t = \frac{1}{t^{1/4}}$, then $\sum^{T}_{t=1}\frac{\eta_t^2}{\delta_t^2} = O\left(\sum^{T}_{t=1}\frac{1}{t}\right) = O(\log(T))$, and $\sum^T_{t=1}\eta_t\delta_t  = O\left(\sum^{T}_{t=1}\frac{1}{t}\right) = O(\log(T))$.
    Hence, we have
    \begin{align*}
        \EE\left[\sum_{i \in \mathcal{N}} D_p\left(x_i^\ast, x_i^{T+1}\right)\right] 
        \leq \ & O \left(\frac{nd\nu\log(T)}{\kappa T^{1/4}} + \frac{n\zeta dB}{T^{3/4}} + \frac{n B  L}{\kappa\sqrt{T}} + \frac{ndC_p}{T^{1/4}} + \frac{nd\log(T)}{\kappa T^{1/4}}  + \frac{\sqrt{n}B^2 L \log(T) }{\kappa T^{1/4}} + \frac{B\Delta}{T^{1/4}}\right)  \,,
    \end{align*}
    where $\Delta = \sum^T_{t=1} \sum_{i \in \mathcal{N}} \max_x \|\nabla_i c_i(x) - \nabla_i c_i^t(x)\|_2$.
\end{proof}
\newpage
\section{Proof of Theorem \ref{thm:equ_tracking}}
\tracking*

\begin{proof}
    We first fix a player $i$ decomposes 
    \begin{align*}
        \sum^T_{t=1} \left\langle \nabla_i c_i^t\left(\hat{x}_i^t, \hat{x}_{-i}^t\right), \hat{x}_i^t - x_i^{t, \ast}\right\rangle 
        = \ & \sum^T_{t=1} \left\langle \nabla_i c_i^t\left(\hat{x}_i^t, \hat{x}_{-i}^t\right), \hat{x}_i^t - \omega_i\right\rangle + \sum^T_{t=1} \left\langle \nabla_i c_i^t\left(\hat{x}_i^t, \hat{x}_{-i}^t\right), \omega_i - x_i^{t, \ast}\right\rangle \,.
    \end{align*}

    For the second term, we partition the horizon of play $T$ into $m$ batches $T_k$, $k \in [m]$, each of length $|T_k| = T^q$, $q \in [0,1]$. We will determine $q$ later. Note that the number of batches is thus $m = T^{1-q}$. For the batch $T_k$, we pick $\omega_i$ to be the Nash equilibrium of the first game. Then 
    \begin{align*}
        \sum_{t \in [T_k]} \left\langle \nabla_i c_i^t\left(\hat{x}_i^t, \hat{x}_{-i}^t\right), \omega_i - x_i^{t, \ast}\right\rangle 
        \leq \ &  \sum_{t \in [T_k]}\left\|\nabla_i c_i^t\left(\hat{x}_i^t, \hat{x}_{-i}^t\right)\right\|\left\|\omega_i - x_i^{t, \ast}\right\| \\
        \leq \ &  G T^q \max_{t \in [T_k]} \left\|\omega_i - x_i^{t, \ast}\right\| \\
        \leq \ & G T^q \sum_{t \in [T_k]}\left\|x_i^{t+1, \ast}- x_i^{t, \ast}\right\| \\
        \leq \ & G T^q V_i(T_k)\,,
    \end{align*}
    where the third inequality is by the definition of $\omega_i$.

    Therefore, we have
    \begin{align*}
        \sum^T_{t=1} \left\langle \nabla_i c_i^t\left(\hat{x}_i^t, \hat{x}_{-i}^t\right), \hat{x}_i^t - x_i^{t, \ast}\right\rangle 
        = \ & \sum^m_{k=1}\sum_{t\in [T_k]}\left\langle \nabla_i c_i^t\left(\hat{x}_i^t, \hat{x}_{-i}^t\right), \hat{x}_i^t - \omega_i\right\rangle  +  G T^q V_i(T) \,.
    \end{align*}

    Define the smoothed version of $c_i$ as $\hat{c}_i^t(x)=\mathbb{E}_{w_i \sim \mathbb{B}^{d}}\left[c_i^t\left(x_i+ \delta A_i w_i, x_{-i}\right)\right]$.
    Then, for batch $T_k$, we decompose $\sum^T_{t=1} \left\langle \nabla_i c_i\left(\hat{x}_i^t, \hat{x}_{-i}^t\right), \hat{x}_i^t - \omega_i\right\rangle $ as 
    \begin{align*}
        & \sum_{t \in [T_k]} \left\langle \nabla_i c_i\left(\hat{x}_i^t, \hat{x}_{-i}^t\right), \hat{x}_i^t - \omega_i\right\rangle \\
        = \ &  \sum_{t \in [T_k]}\left\langle \nabla_i \hat{c}_i\left(\hat{x}_i^t, \hat{x}_{-i}^t\right), \hat{x}_i^t - \omega_i\right\rangle + \sum_{t \in [T_k]}\left\langle \nabla_i c_i\left(\hat{x}_i^t, \hat{x}_{-i}^t\right) - \nabla_i \hat{c}_i\left(\hat{x}_i^t, \hat{x}_{-i}^t\right), \hat{x}_i^t - \omega_i\right\rangle \\
        \leq \ & \sum_{t \in [T_k]} \left\langle \nabla_i \hat{c}_i\left(\hat{x}_i^t, \hat{x}_{-i}^t\right), \hat{x}_i^t - \omega_i\right\rangle + B\sum_{t \in [T_k]} \left\| \nabla_i c_i\left(\hat{x}_i^t, \hat{x}_{-i}^t\right) - \nabla_i \hat{c}_i\left(\hat{x}_i^t, \hat{x}_{-i}^t\right)\right\|_2 \,.
    \end{align*}

    For the first term, recall that by the update rule, we have,
    \begin{align*}
         D_h\left(\omega_i, \hat{x}_i^{t+1}\right)
         = \ & D_h\left(\omega_i, \hat{x}_i^{t}\right) + \eta_t\left\langle \nabla \hat{c}_i^t\left(\hat{x}^{t}\right), \omega_i- \hat{x}_i^{t}\right\rangle + \eta_t\left\langle \hat{g}_i^t - \nabla \hat{c}^t_i\left(\hat{x}^{t}\right), \omega_i- \hat{x}_i^{t}\right\rangle\\
         & \ + \eta_t\left\langle \hat{g}_i^t, \hat{x}_i^{t} - \hat{x}_i^{t+1}\right\rangle \,.
    \end{align*}

    By Lemma \ref{lem:grad_bias}, for any $\omega_i \in \mathcal{X}_i$, we have
    \begin{align*}
        \EE\left[\left\langle \hat{g}_i^t - \nabla \hat{c}^t_i\left(\hat{x}^{t}\right), \omega_i- \hat{x}_i^{t}\right\rangle \mid \hat{x}^{t}\right]
        = \ & \EE\left[\left\langle \nabla_i \hat{c}^t_i(\hat{x}^{t}) - \nabla_i \hat{c}^t_i\left(\hat{x}^{t}\right), \omega_i - \hat{x}_i^{t}\right\rangle \mid \hat{x}^{t}\right] = 0\,.
    \end{align*}
    
    Therefore, 
    \begin{align*}
         \EE\left[ D_h\left(\omega_i, \hat{x}_i^{t+1}\right)\right]
         = \ & \EE\left[D_h\left(\omega_i, \hat{x}_i^{t}\right)+ \eta_t\left\langle \nabla \hat{c}^t_i\left(\hat{x}^{t}\right) , \omega_i- \hat{x}_i^{t}\right\rangle \right] + \eta_t\EE\left[\left\langle \hat{g}_i^t, \hat{x}_i^{t} - \hat{x}_i^{t+1}\right\rangle \right]\,.
    \end{align*}

    Rearranging the terms yields 
    \begin{align*}
        \EE\left[\left\langle \nabla \hat{c}^t_i\left(\hat{x}^{t}\right) , \hat{x}_i^{t} - \omega_i\right\rangle \right]
        \leq \ & \EE\left[\frac{\left(D_h\left(\omega_i, \hat{x}_i^{t}\right) - D_h\left(\omega_i, \hat{x}_i^{t+1}\right)\right)}{\eta_t}+ \eta_t\left\langle \hat{g}_i^t, \hat{x}_i^{t} - \hat{x}_i^{t+1}\right\rangle\right] \,.
    \end{align*}

      By Lemma \ref{lem:grad}, we have $\EE\left[\left\langle \hat{g}_i^t, \hat{x}_i^{t} - \hat{x}_i^{t+1}\right\rangle\right]
        \leq \eta_t\EE\left[\left\|A_i^t \hat{g}_i^t\right\|^2\right] $.
     Taking expectation conditioned on $\hat{x}^{t}$, we have $\EE\left[\left\|A_i^t \hat{g}_i^t\right\|^2 \mid \hat{x}^{t}\right] = \frac{d^2}{\delta_t^2}\EE\left[\tilde{c}_i^t(\hat{x}^t)^2 \|z_i^t\|^2 \mid \hat{x}^{t}\right] \leq \frac{d^2}{\delta_t^2}$, and therefore  $\EE\left[\left\langle \hat{g}_i^t, \hat{x}_i^{t} - \hat{x}_i^{t+1}\right\rangle\right] \leq \frac{\eta_t d^2}{\delta_t^2}$.

     Taking summation over $T$, and take $\eta_t = \frac{1}{2dt^{p}}$, $\delta_t = \frac{1}{t^{r}}$ we have
     \begin{align*}
         \sum_{t \in [T_k]}\EE\left[\left\langle \nabla \hat{c}^t_i\left(\hat{x}^{t}\right) , \hat{x}_i^{t} - \omega_i\right\rangle \right]
         \leq \ & d T^{p} \EE\left[D_h\left(\omega_i, x_i^1\right)\right] + \sum_{t \in [T_k]}\frac{\eta_t d^2}{\delta^2} \\
         \leq \ &  O \left( dT^{p} \EE\left[D_h\left(\omega_i, x_i^1\right) \right] + T^{q(p-2r)}\right) \,,
     \end{align*}
     as we assumed $D_p(x_i, x_i^\prime)$ is bounded for any $x_i, x_i^\prime$.

     Define $\pi_x(y)=\inf \left\{t \geq 0: x+\frac{1}{t}(y-x) \in \mathcal{X}_i\right\}$. Notice that  $x_i^1(x) = \argmin_{x_i \in \mathcal{X}_i} h(x_i)$, so $D_h(\omega_i, x_i^1) = h(\omega_i) - h(x_i^1)$.
     \begin{itemize}
         \item If $\pi_{x_i^1}(\omega_i) \leq 1-\frac{1}{\sqrt{T^q}}$, then by Lemma \ref{lem:self_decord_upper}, $D_h(\omega_i, x_i^1) = \nu\log(T^q)$, and $\sum^T_{t=1} \EE\left[\hat{c}_i\left(\hat{x}^{t}_i, x_{-i}^t\right) - \hat{c}_i\left(\omega_i, x_{-i}^t\right)\right] = O \left( \nu dT^{1-p}\log(T^q)\right)$. 
         \item Otherwise, we find a point $\omega_i^\prime$ such that $\|\omega_i^\prime - \omega_i\| = O(1/\sqrt{T^q})$ and $\pi_{x_i^1}(\omega_i^\prime) \leq 1-\frac{1}{\sqrt{T^q}}$. Then $D_h(\omega_i^\prime, x_i^1) = \nu \log(T^q)$, 
         \begin{align*}
             \left\langle \nabla_i \hat{c}_i^t\left(\omega_i^\prime, x_{-i}^t\right), \omega_i^\prime - \omega_i\right\rangle
             \leq \|\nabla_i \hat{c}_i^t\left(\omega_i^\prime, x_{-i}^t\right)\|\|\omega_i^\prime - \omega_i\| \leq \frac{G}{\sqrt{T^q}} \,.
         \end{align*}
         Therefore, $\sum_{t \in [T_k]} \EE\left[\hat{c}_i\left(\hat{x}^{t}_i, x_{-i}^t\right) - \hat{c}_i\left(\omega_i, x_{-i}^t\right)\right] = O \left( \nu dT^{p}\log(T^q) + GT^{q/2} + T^{q(p-2r)}\right)$.
         
     \end{itemize}

     By the definition of $\hat{c}_i$ and the smoothness of $c_i$, 
    \begin{align*}
        \|\nabla_i \hat{c}_i(\hat{x}^{t}) - \nabla_i c_i\left(\hat{x}^{t}\right)\|
        = \ &\left\|\mathbb{E}_{w_i \sim \mathbb{B}^{d}} \mathbb{E}_{\mathbf{z}_{-i} \sim \Pi_{j \neq i} \mathbb{S}^{d}}\left[ \nabla_ic_i\left(\hat{x}_i^t+\delta_t A_i^t w_i, \hat{x}_{-i}^t\right) - \nabla_i c_i\left(\hat{x}^{t}\right)\right]\right\| \\
        \leq \ & \ell_i \sqrt{\EE_{w_i \sim \mathbb{B}^{d}} \mathbb{E}_{\mathbf{z}_{-i} \sim \Pi_{j \neq i} \mathbb{S}^{d}}\left[\delta_t^2\left\|\delta_t A_i w_i\right\|^2+\delta_t^2\sum_{j \neq i}\left\|A_j z_j\right\|^2\right]} \,.
    \end{align*}
    Since $p$ is monotone, $\nabla^2 p(x)$ is positive semi-definite, and $A_i^t \preceq (\nabla^2 h(x_i))^{-1/2}$. For $\bar{x}_i^t = \hat{x}_i^t+
    A_i^t w_i^t$. Define $\|v\|_x = \sqrt{v^\top \nabla^2 h(x) v}$, we have $\|\bar{x}_i^t - \hat{x}_i^t\|_{x_i} \leq \|\omega_i^t\| \leq 1$, and $\bar{x}_i^t \in W(\hat{x}_i^t)$, where $W(x) = \{x^\prime_i \in \mathbb{R}^d, \|x^\prime_i - x_i\|_{x_i} \leq 1\}$ is the Dikin ellipsoid. Since $W(x_i) \subseteq \mathcal{X}_i, \forall x_i \in \intr(\mathcal{X}_i)$, we can upper bound $\left\|A_i w_i\right\|^2$ by $B^2$, the diameter of the set $\mathcal{X}_i$. Hence $\|\nabla_i \hat{c}_i(\hat{x}^{t}) - \nabla_i c_i\left(\hat{x}^{t}\right)\| \leq \ell_i \delta_t \sqrt{n}B$.
    
    With $\delta_t = \frac{1}{t^{r}}$, we have 
    \begin{align*}
        \sum_{t \in [T_k]} \EE\left[\left\langle \nabla_i c_i\left(\hat{x}_i^t, \hat{x}_{-i}^t\right), \hat{x}_i^t - \omega_i\right\rangle\right] = \  O \left( \nu dT^{p}\log(T^q) + GT^{q/2} + T^{q(p-2r)}+ \ell_i \sqrt{n} B^2 T^{q(1-r)} \right) \,.
    \end{align*} 

    Combining, as $m = T^{1-q}$ we have 
    \begin{align*}
         & \sum^T_{t=1} \EE\left[\left\langle \nabla_i c_i^t\left(\hat{x}_i^t, \hat{x}_{-i}^t\right), \hat{x}_i^t - x_i^{t, \ast}\right\rangle\right] \\
         = \ & O \left( G T^q V_i(T) \right) + \sum_{j \in [m]}\tilde{O} \left( \nu dT^{1-p} + GT^{q/2} + T^{q(p-2r)}+ \ell_i \sqrt{n} B^2 T^{q(1-r)}\right) \\
         = \ & \tilde{O} \left( \nu dT^{(1-q)+p} + GT^{(1-q)+q/2} + T^{(1-q) + q(p-2r)}+ \ell_i \sqrt{n} B^2 T^{(1-q) + q(1-r)}+  G T^q V_i(T) \right) \,.
    \end{align*}
    
    When $V_i(T) = T^{\varphi}$, $\varphi \in [0,1] $, we set $q = \frac{2(1-\varphi)}{3}$, $p = \frac{(1-\varphi)}{3}$, $r = \frac{1}{2}$, we have
    \begin{align*}
         \sum^T_{t=1} \EE\left[\left\langle \nabla_i c_i^t\left(\hat{x}_i^t, \hat{x}_{-i}^t\right), \hat{x}_i^t - x_i^{t, \ast}\right\rangle\right]
        = \ & \tilde{O} \left( \left(\nu d +G +  \ell_i \sqrt{n} B^2\right)T^{\frac{1+2\varphi}{3}} + T^{\frac{(2\varphi + 1)(\varphi+2)}{9}}  \right)\,. 
    \end{align*}
    Divided by $T$, we have
    \begin{align*}
         \frac{1}{T}\sum^T_{t=1} \EE\left[\left\langle \nabla_i c_i^t\left(\hat{x}_i^t, \hat{x}_{-i}^t\right), \hat{x}_i^t - x_i^{t, \ast}\right\rangle\right]
         = \ & \tilde{O} \left(\frac{\nu d  + G + \ell_i \sqrt{n} B^2 }{T^{\frac{2(1-\varphi)}{3}}} + \frac{1}{T^{\frac{9}{8}-\frac{(4\varphi +5)^2}{72}}}\right) \,.
    \end{align*}
    Sum over $i \in \mathcal{N}$ and we have the claimed result. 
\end{proof}

\newpage
\section{Auxiliary Lemmas}
\begin{lem}\label{lem:recur}
    With the update rule \eqref{eq:update_1}, 
    \begin{align*}
        & \sum_{i \in \mathcal{N}} D_p\left(x_i^\ast, x_i^{T+1}\right) \\
        \leq \ &  O \left(\frac{n\nu\log(T)}{\eta_T \kappa T} + \frac{n\zeta B}{\eta_T  T^{3/2}}\right)+  O \left( \frac{n B  \sum_{i \in \mathcal{N}} \ell_i}{\kappa T^{3/2}} + \frac{ n }{ \kappa T^{3/2}}\right) \frac{\sum^T_{t=1}\eta_t}{\eta_T}+ O \left(\frac{n C_p}{\eta_T T}\right) \\
        &+ \frac{1}{\kappa\eta_T(T+1)}\sum_{i \in \mathcal{N}} \sum^{T}_{t=1}\eta_t\left\langle \hat{g}_i^t, x_i^{t} - x_i^{t+1}\right\rangle+ \frac{1}{\kappa\eta_T (T+1)} \sum^{T}_{t=1}\eta_t\sum_{i \in \mathcal{M}}\left\langle \hat{g}_i^t - \nabla_i c_i\left(x^t\right), x_i^\ast- x_i^{t}\right\rangle \\
        &+ \frac{1}{\kappa\eta_T (T+1)} \sum^{T}_{t=1}\eta_t\sum_{i \in \mathcal{N}\setminus\mathcal{M}}\left\langle \hat{g}_i^t - \nabla_i c_i\left(x^t\right), \bar{x}_i- x_i^{t}\right\rangle\,,
    \end{align*}
    where $\Bar{x}_i$ is a point such that $\|\Bar{x}_i - x_i^\ast\| = O(1/\sqrt{T})$ and $\inf \left\{t \geq 0: x_i^1+\frac{1}{t}(\Bar{x}_i-x_i^1) \in \mathcal{X}_i\right\} \leq 1-1/\sqrt{T}$.
\end{lem}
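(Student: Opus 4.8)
The plan is to unwind the proximal update \eqref{eq:update_1} for a single coordinate $i$ and an arbitrary comparison point $\omega_i\in\mathcal X_i$. First-order optimality of the argmin gives $\langle \eta_t\hat g_i^t + \eta_t\kappa(t+1)(\nabla p(x_i^{t+1})-\nabla p(x_i^t)) + \nabla h(x_i^{t+1})-\nabla h(x_i^t),\ x_i^{t+1}-\omega_i\rangle\le 0$. Rewriting the $\nabla h$ and $\nabla p$ inner products through the three-point identity for Bregman divergences, discarding the nonnegative term $\eta_t\kappa(t+1)D_p(x_i^{t+1},x_i^t)$ and keeping $-D_h(x_i^{t+1},x_i^t)$ (this negative term is what Lemma~\ref{lem:grad} later pairs with the stability term $\langle\hat g_i^t,x_i^t-x_i^{t+1}\rangle$), I obtain the per-step recursion
\begin{align*}
D_h(\omega_i,x_i^{t+1})+\eta_t\kappa(t+1)D_p(\omega_i,x_i^{t+1})\le\ & D_h(\omega_i,x_i^{t})+\eta_t\kappa(t+1)D_p(\omega_i,x_i^{t})\\
&+\eta_t\langle\nabla_ic_i(x^t),\omega_i-x_i^t\rangle+\eta_t\langle\hat g_i^t-\nabla_ic_i(x^t),\omega_i-x_i^t\rangle+\eta_t\langle\hat g_i^t,x_i^t-x_i^{t+1}\rangle,
\end{align*}
where I have already split $\langle\hat g_i^t,\omega_i-x_i^{t+1}\rangle$ into $\langle\hat g_i^t,\omega_i-x_i^t\rangle+\langle\hat g_i^t,x_i^t-x_i^{t+1}\rangle$ and peeled the true gradient out of the first piece.

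Next I would sum over $i\in\mathcal N$ and invoke the two-regularizer structure. Because $c_i-\kappa p$ is convex, $F-\kappa\nabla p$ is monotone, hence $\sum_i\langle\nabla_ic_i(x^t),\omega_i-x_i^t\rangle=\langle F(x^t),\omega-x^t\rangle\le\langle F(\omega),\omega-x^t\rangle-\kappa\sum_i\bigl(D_p(\omega_i,x_i^t)+D_p(x_i^t,\omega_i)\bigr)$, using the Bregman identity for $\nabla p$. The key observation is that the new term $-\eta_t\kappa\sum_iD_p(\omega_i,x_i^t)$ lowers the coefficient of $D_p(\omega_i,x_i^t)$ on the right-hand side from $\eta_t\kappa(t+1)$ to $\eta_t\kappa t$; since $\{\eta_t\}$ is nonincreasing, $\eta_t\kappa t\le\eta_{t-1}\kappa t$, so this term is absorbed by the left-hand side of the step-$(t-1)$ inequality and the $D_p$ contributions telescope; the $D_h$ terms telescope outright. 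Summing $t=1,\dots,T$, dropping $D_h(\omega_i,x_i^{T+1})\ge0$, and using $D_p(\omega_i,x_i^1)\le C_p$ leaves
\begin{align*}
\eta_T\kappa(T+1)\sum_iD_p(\omega_i,x_i^{T+1})\le\ &\sum_iD_h(\omega_i,x_i^1)+O(nC_p)+\sum_{t=1}^T\eta_t\langle F(\omega),\omega-x^t\rangle\\
&+\sum_{t=1}^T\eta_t\sum_i\langle\hat g_i^t-\nabla_ic_i(x^t),\omega_i-x_i^t\rangle+\sum_{t=1}^T\eta_t\sum_i\langle\hat g_i^t,x_i^t-x_i^{t+1}\rangle.
\end{align*}

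It then remains to choose $\omega_i$ so that $D_h(\omega_i,x_i^1)$ and $\sum_t\eta_t\langle F(\omega),\omega-x^t\rangle$ are simultaneously small. Let $\mathcal M$ be the set of players for which $\pi_{x_i^1}(x_i^\ast)\le1-1/\sqrt T$; for $i\in\mathcal M$ I take $\omega_i=x_i^\ast$, so Lemma~\ref{lem:self_decord_upper} gives $D_h(x_i^\ast,x_i^1)\le\nu\log T$ and the variational inequality characterizing the Nash equilibrium makes the corresponding part of $\langle F(x^\ast),x^\ast-x^t\rangle$ nonpositive. For $i\notin\mathcal M$ I take $\omega_i=\bar x_i$, an interior point with $\|\bar x_i-x_i^\ast\|=O(1/\sqrt T)$ satisfying the same $\pi$-condition, so still $D_h(\bar x_i,x_i^1)\le\nu\log T$; the price is that $\langle F(\bar x),\bar x-x^t\rangle$ is only $O((G+LB)/\sqrt T)$ per round, bounded via $\|\bar x-x^\ast\|=O(1/\sqrt T)$, the gradient bound $G$, and the $L$-Lipschitzness of $F$. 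Finally, passing from $\sum_iD_p(\omega_i,x_i^{T+1})$ on the left to $\sum_iD_p(x_i^\ast,x_i^{T+1})$ costs a further term governed by the smoothness bound $\zeta$ and the radius $B$; dividing through by $\eta_T\kappa(T+1)$, bounding $\sum_t\eta_t\langle F(\omega),\omega-x^t\rangle$ by $O\bigl((nBL+n)/\sqrt T\bigr)\sum_t\eta_t$, and leaving the last two (estimator-dependent) sums untouched yields the stated inequality, with the residuals split between $\mathcal M$ and $\mathcal N\setminus\mathcal M$ exactly as written.

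I expect the main obstacle to be this comparison-point selection: $D_h(\cdot,x_i^1)$ insists that $\omega_i$ stay a fixed distance inside the feasible set, whereas the $\langle F(\omega),\omega-x^t\rangle$ term wants $\omega_i$ to be exactly the equilibrium, which may lie on the boundary. The resolution is the $O(1/\sqrt T)$ perturbation $\bar x_i$, after which one must re-check that the monotonicity step still closes — that $\langle F(\bar x)-F(x^\ast),\bar x-x^t\rangle+\langle F(x^\ast),\bar x-x^\ast\rangle$ is genuinely $O(1/\sqrt T)$ and that the $D_p$-telescoping is unaffected (it is, since it relied only on monotonicity of $F-\kappa\nabla p$, valid at every point). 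The rest — the three-point identities, the telescoping bookkeeping, and tracking the constants $n,\nu,\zeta,B,\kappa,C_p$ — is routine.
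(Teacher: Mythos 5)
Your proposal is correct and follows essentially the same route as the paper's proof: the same optimality-condition/three-point-identity recursion, the same use of monotonicity of $F-\kappa\nabla p$ to drop the $D_p$ coefficient from $\eta_t\kappa(t+1)$ to $\eta_t\kappa t$ and telescope under nonincreasing $\eta_t$, the same $\mathcal{M}$-split choice of comparison point ($x_i^\ast$ versus an $O(1/\sqrt{T})$-perturbed interior point $\bar{x}_i$) with the $\nu\log T$ self-concordant-barrier bound, and the same conversion of $D_p(\bar{x}_i,\cdot)$ back to $D_p(x_i^\ast,\cdot)$ via the smoothness constant $\zeta$. No gaps worth flagging.
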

\begin{proof}
    By the update rule \eqref{eq:update_1}, we have
    \begin{align*}
        \eta_t \hat{g}_i^t +\eta_t\kappa  (t+1)\left(\nabla p\left(x_i^{t+1}\right) - \nabla p\left(x_i^{t}\right)\right) + \left(\nabla h\left(x_i^{t+1}\right) - \nabla h\left(x_i^{t}\right)\right) = 0 \,.
    \end{align*}
    For a fixed point $\omega_i$, by the three-point equality of Bregman divergence, we have
    \begin{align*}
        & D_h\left(\omega_i, x_i^{t+1}\right) \\
        = \ & D_h\left(\omega_i, x_i^{t}\right) - D_h\left(x_i^{t+1}, x_i^{t}\right) + \left\langle \nabla h\left(x_i^{t}\right) - \nabla h\left(x_i^{t+1}\right), \omega_i- x_i^{t+1}\right\rangle \\
        = \ & D_h\left(\omega_i, x_i^{t}\right) - D_h\left(x_i^{t+1}, x_i^{t}\right) + \eta_t\left\langle \hat{g}_i^t, \omega_i- x_i^{t+1}\right\rangle + \eta_t \kappa (t+1) \left\langle \nabla p\left(x_i^{t+1}\right) - \nabla p\left(x_i^{t}\right),  \omega_i- x_i^{t+1}\right\rangle \\
        = \ & D_h\left(\omega_i, x_i^{t}\right) - D_h\left(x_i^{t+1}, x_i^{t}\right) + \eta_t\left\langle \hat{g}_i^t, \omega_i- x_i^{t+1}\right\rangle + \eta_t \kappa(t+1) \left(D_p\left(\omega_i, x_i^t\right) - D_p\left(\omega_i, x_i^{t+1}\right) - D_p\left(x_i^{t+1}, x_i^t\right)\right) \,.
    \end{align*}

    Rearranging and by the non-negativity of Bregman divergence, we have, 
    \begin{align*}
         & D_h\left(\omega_i, x_i^{t+1}\right) + \eta_t \kappa(t+1) D_p\left(\omega_i, x_i^{t+1}\right)\\
         \leq \ & D_h\left(\omega_i, x_i^{t}\right) + \eta_t\kappa (t+1)  D_p\left(\omega_i, x_i^t\right)+ \eta_t\left\langle \hat{g}_i^t, \omega_i- x_i^{t}\right\rangle + \eta_t\left\langle \hat{g}_i^t, x_i^{t} - x_i^{t+1}\right\rangle  \\
         = \ & D_h\left(\omega_i, x_i^{t}\right) + \eta_t \kappa(t+1)  D_p\left(\omega_i, x_i^t\right)+ \eta_t\left\langle  \nabla_i c_i\left(x^t\right), \omega_i- x_i^{t}\right\rangle + \eta_t\left\langle \hat{g}_i^t - \nabla_i c_i\left(x^t\right), \omega_i- x_i^{t}\right\rangle+ \eta_t\left\langle \hat{g}_i^t, x_i^{t} - x_i^{t+1}\right\rangle \,. 
    \end{align*}

    By Lemma \ref{lem:auxiliary_1} and the assumption that $c_i(x) - \kappa p(x_i)$ is monotone, we have
    \begin{align*}
        \eta_t \sum_{i \in \mathcal{N}}\left\langle \nabla_i c_i\left(x^t\right), \omega_i- x_i^{t}\right\rangle 
        \leq \ & - \eta_t \kappa \sum_{i \in \mathcal{N}}\left(D_p\left(x_i^t, \omega_i\right) + D_p\left(\omega_i, x_i^t\right)\right) + \eta_t \sum_{i \in \mathcal{N}}\left\langle \nabla_i c_i\left(\omega\right), \omega_i- x_i^{t}\right\rangle  \,.
    \end{align*}

    Therefore, 
    \begin{align*}
        &  \sum_{i \in \mathcal{N}} D_h\left(\omega_i, x_i^{t+1}\right) + \eta_t \kappa(t+1)  \sum_{i \in \mathcal{N}}D_p\left(\omega_i, x_i^{t+1}\right)\\
        \leq \ &   \sum_{i \in \mathcal{N}}D_h\left(\omega_i, x_i^{t}\right) + \eta_t \kappa t   \sum_{i \in \mathcal{N}}D_p\left(\omega_i, x_i^t\right)+ \eta_t \sum_{i \in \mathcal{N}}\left\langle \nabla_i c_i(\omega), \omega_i- x_i^{t}\right\rangle + \eta_t \sum_{i \in \mathcal{N}}\left\langle \hat{g}_i^t - \nabla_i c_i\left(x^t\right), \omega_i- x_i^{t}\right\rangle\\
        & \ + \eta_t \sum_{i \in \mathcal{N}}\left\langle \hat{g}_i^t, x_i^{t} - x_i^{t+1}\right\rangle \,.
    \end{align*}

    Summing over $T$, by the non-negativity of Bregman divergence, we have
    \begin{align*}
        &  \eta_T \kappa(T+1) \sum_{i \in \mathcal{N}} D_p\left(\omega_i, x_i^{T+1}\right) \\
        \leq \ & \sum_{i \in \mathcal{N}}D_h\left(\omega_i, x_i^{1}\right) + \kappa\sum_{i \in \mathcal{N}} D_p\left(\omega_i, x_i^{1}\right) +\sum^{T}_{t=1} \sum_{i \in \mathcal{N}}  \eta_t \left\langle \nabla_i c_i(\omega), \omega_i- x_i^{t}\right\rangle + \sum^{T}_{t=1} \sum_{i \in \mathcal{N}} \eta_t \left\langle \hat{g}_i^t - \nabla_i c_i\left(x^t\right), \omega_i- x_i^{t}\right\rangle \\
        & \ + \sum^{T}_{t=1} \sum_{i \in \mathcal{N}} \eta_t \left\langle \hat{g}_i^t, x_i^{t} - x_i^{t+1}\right\rangle \,.
    \end{align*}

    Define $\pi_x(y)=\inf \left\{t \geq 0: x+\frac{1}{t}(y-x) \in \mathcal{X}_i\right\}$, let us consider  $x_i^\ast$, the equilibrium of the game. 
    \begin{itemize}
        \item If $\pi_{x_i^1}(x_i^\ast) \leq 1-1/\sqrt{T}$, we set $\omega_i = x_i^\ast$. Let this set of player be $\mathcal{M}$
        \item Otherwise, we find $\Bar{x}_i \in \mathcal{X}_i$ such that $\|\Bar{x}_i - x_i^\ast\| = O(1/\sqrt{T})$ and $\pi_{x_i^1}(\bar{x}_i) \leq 1-1/\sqrt{T}$. We set $\omega_i = \Bar{x}_i$.
    \end{itemize}
    By Lemma \ref{lem:self_decord_upper}, and initializing $x_i^1$ to minimize $h$, thus $D_h (\omega_i, x_i^1) = h(\omega_i) - h(x_i^1) \leq \nu \log(T)$.

    Therefore, we have
    \begin{align*}
        &  \eta_T \kappa(T+1) \left(\sum_{i \in \mathcal{M}} D_p\left(x_i^\ast, x_i^{T+1}\right) + \sum_{i \in \mathcal{N}\setminus\mathcal{M}} D_p\left(\Bar{x}_i, x_i^{T+1}\right) \right)\\
        \leq \ & n\nu\log(T) + \kappa\sum_{i \in \mathcal{M}} D_p\left(x_i^\ast, x_i^{1}\right) + \kappa\sum_{i \in \mathcal{N}\setminus\mathcal{M}} D_p\left(\Bar{x}_i, x_i^{1}\right)+ \sum^{T}_{t=1}\eta_t\sum_{i \in \mathcal{M}} \left\langle \nabla_i c_i(x^\ast_{\mathcal{M}}, \bar{x}_{\mathcal{N}\setminus \mathcal{M}}), x_i^\ast- x_i^{t}\right\rangle \\
        &+ \sum^{T}_{t=1}\eta_t\sum_{i \in \mathcal{N}\setminus\mathcal{M}} \left\langle \nabla_i c_i(x^\ast_{\mathcal{M}}, \bar{x}_{\mathcal{N}\setminus \mathcal{M}}), \bar{x}_i- x_i^{t}\right\rangle  + \eta_t  \sum^{T}_{t=1}\sum_{i \in \mathcal{M}}\left\langle \hat{g}_i^t - \nabla_i c_i\left(x^t\right), x_i^\ast- x_i^{t}\right\rangle \\
        & \ + \eta_t  \sum^{T}_{t=1}\sum_{i \in \mathcal{N}\setminus\mathcal{M}}\left\langle \hat{g}_i^t - \nabla_i c_i\left(x^t\right), \bar{x}_i- x_i^{t}\right\rangle+ \sum_{i \in \mathcal{N}} \sum^{T}_{t=1} \eta_t \left\langle \hat{g}_i^t, x_i^{t} - x_i^{t+1}\right\rangle \,.
    \end{align*}

    By the three-point inequality and the non-negativity of Bregman divergence, we have
    \begin{align*}
        \sum_{i \in \mathcal{N}\setminus \mathcal{M}}D_p\left(\bar{x}_i, x_i^{T+1}\right) = \ & \sum_{i \in \mathcal{N}\setminus \mathcal{M}}D_p\left(\bar{x}_i, x_i^\ast\right) + \sum_{i \in \mathcal{N}\setminus \mathcal{M}}D_p\left(x_i^\ast, x_i^{T+1}\right) - \sum_{i \in \mathcal{N}\setminus \mathcal{M}}\left\langle\bar{x}_i- x_i^\ast, \nabla p\left(x_i^{T+1}\right) - \nabla p\left(\bar{x}_i\right)\right\rangle \\
        \geq \ & \sum_{i \in \mathcal{N}\setminus \mathcal{M}}D_p\left(x_i^\ast, x_i^{T+1}\right) -\sum_{i \in \mathcal{N}\setminus \mathcal{M}}\left\langle\bar{x}_i- x_i^\ast, \nabla p\left(x_i^{T+1}\right) - \nabla p\left(\bar{x}_i\right)\right\rangle \,.
    \end{align*}
    By Cauchy-Schwarz and the smoothness of $p$, we have
    \begin{align*}
        \sum_{i \in \mathcal{N}\setminus \mathcal{M}}\left\langle\bar{x}_i- x_i^\ast, \nabla p\left(x_i^{T+1}\right) - \nabla p\left(\bar{x}_i\right)\right\rangle 
        \leq \ & \sum_{i \in \mathcal{N}\setminus \mathcal{M}} \left\|\bar{x}_i- x_i^\ast \right\| \left\|\nabla p\left(x_i^{T+1}\right) - \nabla p\left(\bar{x}_i\right)\right\|\\
        \leq \ & \zeta \sum_{i \in \mathcal{N}\setminus \mathcal{M}} \left\|\bar{x}_i- x_i^\ast \right\| \left\| x_i^{T+1} - \bar{x}_i\right\| \\
        \leq \ & O \left(\frac{n \zeta B }{\sqrt{T}}\right)
    \end{align*}

    As $x_i^\ast$ is a Nash equilibrium, we have $\sum_{i \in \mathcal{N}} \left\langle \nabla_i c_i(x^\ast), x_i^\ast- x_i^{t}\right\rangle  = 0$, therefore, 
    \begin{align*}
        & \eta_t \sum_{i \in \mathcal{M}} \left\langle \nabla_i c_i(x^\ast_{\mathcal{M}}, \bar{x}_{\mathcal{N}\setminus \mathcal{M}}), x_i^\ast- x_i^{t}\right\rangle + \eta_t \sum_{i \in \mathcal{N}\setminus\mathcal{M}} \left\langle \nabla_i c_i(x^\ast_{\mathcal{M}}, \bar{x}_{\mathcal{N}\setminus \mathcal{M}}), \bar{x}_i- x_i^{t}\right\rangle \\
        = \ & \eta_t \sum_{i \in \mathcal{N}} \left\langle \nabla_i c_i(x^\ast), x_i^\ast- x_i^{t}\right\rangle + \eta_t \sum_{i \in \mathcal{N}} \left\langle \nabla_i c_i(x^\ast_{\mathcal{M}}, \bar{x}_{\mathcal{N}\setminus \mathcal{M}}) -\nabla_i c_i(x^\ast), x_i^\ast- x_i^{t}\right\rangle \\
        & \ + \eta_t \sum_{i \in \mathcal{N}\setminus\mathcal{M}} \left\langle\nabla_i c_i(x^\ast_{\mathcal{M}}, \bar{x}_{\mathcal{N}\setminus \mathcal{M}}), \bar{x}_i- x_i^\ast\right\rangle \\
        \leq \ & \eta_t \sum_{i \in \mathcal{N}} \ell_i \left\|x_i^\ast- x_i^{t}\right\| \left(\sum_{i \in \mathcal{N}\setminus\mathcal{M}}\left\|x^\ast_i - \Bar{x}_i\right\|\right)+ \eta_t \sum_{i \in \mathcal{N}\setminus\mathcal{M}} \left\| \nabla_i c_i(x^\ast_{\mathcal{M}},\bar{x}_{\mathcal{N}\setminus \mathcal{M}})\right\|\left\|  \bar{x}_i- x_i^\ast\right\|\\
        \leq \ &  O \left(\frac{\eta_t n B  \sum_{i \in \mathcal{N}} \ell_i}{\sqrt{T}} + \frac{\eta_t n }{\sqrt{T}} \right) \,.
    \end{align*}

    Hence, as $D_p(x_i, x_i^\prime) \leq C_p, \forall x_i, x_i^\prime$,  
    \begin{align*}
        & \sum_{i \in \mathcal{N}} D_p\left(x_i^\ast, x_i^{T+1}\right) \\
        \leq \ &  O \left(\frac{n\nu\log(T)}{\eta_T \kappa T} + \frac{n\zeta B}{\eta_T  T^{3/2}}\right)+  O \left( \frac{n B  \sum_{i \in \mathcal{N}} \ell_i}{\kappa T^{3/2}} + \frac{ n }{ \kappa T^{3/2}}\right) \frac{\sum^T_{t=1}\eta_t}{\eta_T}+ O \left(\frac{n C_p}{\eta_T T}\right) \\
        &+ \frac{1}{\kappa\eta_T(T+1)}\sum_{i \in \mathcal{N}} \sum^{T}_{t=1}\eta_t\left\langle \hat{g}_i^t, x_i^{t} - x_i^{t+1}\right\rangle+ \frac{1}{\kappa\eta_T (T+1)} \sum^{T}_{t=1}\eta_t\sum_{i \in \mathcal{M}}\left\langle \hat{g}_i^t - \nabla_i c_i\left(x^t\right), x_i^\ast- x_i^{t}\right\rangle \\
        &+ \frac{1}{\kappa\eta_T (T+1)} \sum^{T}_{t=1}\eta_t\sum_{i \in \mathcal{N}\setminus\mathcal{M}}\left\langle \hat{g}_i^t - \nabla_i c_i\left(x^t\right), \bar{x}_i- x_i^{t}\right\rangle\,.
    \end{align*}
\end{proof}

\begin{lem}\label{lem:grad}
    Take $\eta_t \leq \frac{1}{2d}$, we have
    \begin{align*}
        \left\langle\hat{g}_i^t, x_i^{t} - x_i^{t+1}\right\rangle
        = \ & \eta_t \left\|A_i^t \hat{g}_i^t\right\|^2 \,.
    \end{align*}
\end{lem}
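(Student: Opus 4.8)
The plan is to obtain the displacement $x_i^t - x_i^{t+1}$ in closed form from the first-order optimality condition of the update \eqref{eq:update_1} and then compute the inner product directly. First I would rewrite the objective in \eqref{eq:update_1} as $\eta_t\langle x_i,\hat g_i^t\rangle + D_\Phi(x_i,x_i^t)$ for the combined mirror map $\Phi = h + \eta_t\kappa(t+1)\,p$, whose divergence satisfies $D_\Phi = D_h + \eta_t\kappa(t+1)D_p$ and inherits from the barrier $h$ the blow-up $D_\Phi(\cdot,x_i^t)\to\infty$ on $\partial\mathcal X_i$. Consequently the minimizer $x_i^{t+1}$ is attained in $\intr(\mathcal X_i)$ and satisfies the stationarity condition $\eta_t\hat g_i^t + \nabla\Phi(x_i^{t+1}) - \nabla\Phi(x_i^t) = 0$, as already recorded at the top of the proof of Lemma~\ref{lem:recur}.

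Next I would identify the curvature governing the step with the matrix $(A_i^t)^{-2} = \nabla^2 h(x_i^t) + \eta_t(t+1)\nabla^2 p(x_i^t)$. The restriction $\eta_t\le\frac{1}{2d}$ is precisely what keeps the step inside the Dikin ellipsoid of $x_i^t$: since $z_i^t\in\mathbb S^d$ and $c_i(\hat x^t)\in[0,1]$, the estimator obeys $\|A_i^t\hat g_i^t\| = \tfrac{d}{\delta_t}c_i(\hat x^t)\|z_i^t\|\le d/\delta_t$, so the local-norm length of the step, $\eta_t\|A_i^t\hat g_i^t\|\le \eta_t d/\delta_t$, stays below $1$ for each schedule of Theorem~\ref{thm:nash_perfet_feedback}. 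In this regime the optimality condition reduces to the preconditioned form $\eta_t\hat g_i^t + (A_i^t)^{-2}(x_i^{t+1}-x_i^t)=0$, which I solve for the closed-form displacement $x_i^t - x_i^{t+1} = \eta_t (A_i^t)^2\hat g_i^t$.

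The asserted identity then follows in one line. Because $A_i^t$ is symmetric positive definite (a negative power of the symmetric matrix $\nabla^2 h(x_i^t)+\eta_t(t+1)\nabla^2 p(x_i^t)$), substituting the displacement gives $\langle\hat g_i^t, x_i^t - x_i^{t+1}\rangle = \eta_t\langle\hat g_i^t,(A_i^t)^2\hat g_i^t\rangle = \eta_t\langle A_i^t\hat g_i^t, A_i^t\hat g_i^t\rangle = \eta_t\|A_i^t\hat g_i^t\|^2$, which is exactly the claim.

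The main obstacle is justifying the middle step, namely that the mirror step produced by \eqref{eq:update_1} coincides with the preconditioned gradient step $\eta_t(A_i^t)^2\hat g_i^t$, i.e. that the quadratic model induced by the metric $(A_i^t)^{-2}$ captures the displacement exactly. This is where self-concordance of $h$ enters, through the third-order bound $|\nabla^3 h(x)[\lambda,\lambda,\lambda]|\le 2(\lambda^\top\nabla^2 h(x)\lambda)^{3/2}$ of Definition~\ref{def:self_concordant}: together with $\eta_t\le\frac{1}{2d}$ it confines $x_i^{t+1}$ to the Dikin ellipsoid and pins down the curvature along the segment from $x_i^t$ to $x_i^{t+1}$. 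I would carry out the associated bookkeeping carefully, reconciling the curvature contributed by $p$ (which enters both $(A_i^t)^{-2}$ and $D_\Phi$) and noting that only $\nabla^2 p\succeq\mu I\succeq 0$ and $\nabla^2 p\preceq\zeta I$ are invoked, so no third derivatives of $p$ are needed.
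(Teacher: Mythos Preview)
Your plan hinges on the claim that the first-order optimality condition
\[
\eta_t\hat g_i^t+\nabla\Phi(x_i^{t+1})-\nabla\Phi(x_i^t)=0
\]
``reduces to'' $\eta_t\hat g_i^t+(A_i^t)^{-2}(x_i^{t+1}-x_i^t)=0$. This is a genuine gap. The replacement $\nabla\Phi(x_i^{t+1})-\nabla\Phi(x_i^t)=\nabla^2\Phi(x_i^t)(x_i^{t+1}-x_i^t)$ is exact only when $\Phi$ is quadratic, and $\Phi$ here contains the self-concordant barrier $h$, which is never quadratic. Self-concordance does \emph{not} ``pin down the curvature'' to an equality: on the Dikin ellipsoid of radius $r$ it gives only the sandwich $(1-r)^2\nabla^2\Phi(x_i^t)\preceq\nabla^2\Phi(y)\preceq(1-r)^{-2}\nabla^2\Phi(x_i^t)$, so the closed-form displacement $x_i^t-x_i^{t+1}=\eta_t(A_i^t)^2\hat g_i^t$ is in general false, and with it your one-line computation of the inner product.

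The paper does not attempt a closed form for the step. Instead it applies Cauchy--Schwarz in the local norm induced by $\nabla^2 f(x_i^t)=(A_i^t)^{-2}$,
\[
\langle\hat g_i^t,x_i^t-x_i^{t+1}\rangle\le\|\hat g_i^t\|_{x_i^t,*}\,\|x_i^t-x_i^{t+1}\|_{x_i^t},
\]
identifies $\|\hat g_i^t\|_{x_i^t,*}=\|A_i^t\hat g_i^t\|$, and then bounds the step length $\|x_i^t-x_i^{t+1}\|_{x_i^t}$ via the standard self-concordance estimate (Lemma~\ref{lem:self_decord}): if the Newton decrement $\lambda(x_i^t,f)=\eta_t\|A_i^t\hat g_i^t\|\le\tfrac12$ (this is exactly what $\eta_t\le\tfrac{1}{2d}$ and $|c_i|\le1$ buy), then $\|x_i^t-\arg\min f\|_{x_i^t}\le 2\lambda(x_i^t,f)$. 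Combining these gives $\langle\hat g_i^t,x_i^t-x_i^{t+1}\rangle\le 2\eta_t\|A_i^t\hat g_i^t\|^2$. The equality sign in the lemma statement (and in the H\"older step of the paper's own proof) should be read as an inequality; only the upper bound, up to an absolute constant, is ever used downstream.
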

\begin{proof}
    Define 
    \begin{align*}
        f(x_i) = \eta_t\left\langle x_i, \hat{g}^t_i\right\rangle + \eta_t  (t+1) D_p(x_i, x_i^t) + D_h(x_i, x_i^t)\,.
    \end{align*}
    As adding the linear term $\langle x_i, \hat{g}_i^t\rangle$ does not affect the self-concordant barrier property, and $p$ is strongly monotone, $f(x)$ is a self-concordant barrier. 

     Define the local norm $\|h\|_x:=\sqrt{h^{\top} \nabla^2 f(x) h}$, by Holder's inequality, we have
    \begin{align*}
        \left\langle\hat{g}_i^t, x_i^{t} - x_i^{t+1}\right\rangle
        = \ & \left\|\hat{g}_i^t\right\|_{x^t_i, \ast} \left\|x_i^{t} - x_i^{t+1}\right\|_{x_i^t} \,.
    \end{align*}

    Notice that 
    \begin{align*}
        \nabla f(x_i^t) = \eta_t \hat{g}_i^t \,, \nabla^2f(x_i^t) =  \eta_t  (t+1) \nabla^2 p(x_i^t) + \nabla^2 h(x_i^t) \,.
    \end{align*}

    Therefore, by our assumption that $c_i(x) \in [0,1]$, 
    \begin{align*}
        \left\|\left(\nabla^2 f(x_i^t)\right)^{-1} \nabla f(x_i^t)\right\|_{x^t_i}
        = \ & \eta_t \left\|A_i^t \hat{g}_i^t\right\| \\
        \leq \ & \eta_t d |c_i(\hat{x}^t)| \leq \eta_t d\,.
    \end{align*}

    By Lemma \ref{lem:self_decord}, take $\eta_t \leq \frac{1}{2d}$, we have
    \begin{align*}
        \left\|x_i^{t} - x_i^{t+1}\right\|_{x_i^t} 
        = \ &  \left\|x_i^{t} - \arg\min_x f(x_i^t)\right\|_{x_i^t} \leq \ 2\left\|\left(\nabla^2 f(x_i^t)\right)^{-1} \nabla f(x_i^t)\right\|_{x^t_i}\leq  \eta_t \left\|A_i^t \hat{g}_i^t\right\|\,.
    \end{align*}
    Therefore, we have
    \begin{align*}
        \left\langle\hat{g}_i^t, x_i^{t} - x_i^{t+1}\right\rangle
        = \ & \eta_t \left\|A_i^t \hat{g}_i^t\right\|^2 \,.
    \end{align*}

\end{proof}

\begin{lem}\label{lem:auxiliary_1}[Proposition 1 \cite{BauschkeBT17}]
For an operator $G$ that $G - \nabla p(x)$ is monotone, 
\begin{align*}
        \left\langle G(x) - G(x^\prime), x^\prime - x\right\rangle 
         \leq \ & - \sum_{i \in \mathcal{N}}\left(D_p\left(x_i, x_i^\prime\right) + D_p\left(x_i^\prime, x_i\right)\right) \,.
    \end{align*}
\end{lem}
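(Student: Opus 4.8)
The plan is to exploit the decomposition $G = M + \nabla p$, where $M := G - \nabla p$ is monotone by hypothesis, and to recognize that the contribution of the $\nabla p$ part is \emph{exactly} the symmetrized Bregman divergence on the right-hand side. First I would split the inner product as
\begin{align*}
\left\langle G(x) - G(x'), x' - x\right\rangle = \left\langle M(x) - M(x'), x' - x\right\rangle + \left\langle \nabla p(x) - \nabla p(x'), x' - x\right\rangle \,.
\end{align*}
The first summand is handled purely by monotonicity: since $M$ is monotone, $\left\langle M(x) - M(x'), x - x'\right\rangle \ge 0$, hence $\left\langle M(x) - M(x'), x' - x\right\rangle \le 0$, and this term is simply dropped.

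The second summand I would compute exactly rather than bound. Writing out $D_p(u,v) = p(u) - p(v) - \left\langle \nabla p(v), u - v\right\rangle$ and adding the two orderings, the function-value terms cancel and one obtains the standard symmetrized identity
\begin{align*}
D_p(u, v) + D_p(v, u) = \left\langle \nabla p(u) - \nabla p(v), u - v\right\rangle \,,
\end{align*}
valid for any differentiable $p$ (no convexity is needed for this step). Because $\nabla p(x) = [\nabla_i p(x_i)]_{i \in \mathcal{N}}$ acts block-wise and the ambient inner product splits across the players, applying this identity to each block pair $(x_i, x_i')$ and summing gives $\left\langle \nabla p(x) - \nabla p(x'), x - x'\right\rangle = \sum_{i \in \mathcal{N}}\left(D_p(x_i, x_i') + D_p(x_i', x_i)\right)$. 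Therefore the second summand equals $-\sum_{i \in \mathcal{N}}\left(D_p(x_i, x_i') + D_p(x_i', x_i)\right)$, and combining with the $M$-term bound yields the claimed inequality.

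The computation is elementary, so the only point requiring care — the ``main obstacle'' such as it is — is the block-separability bookkeeping: one must keep straight that monotonicity is assumed for $G - \nabla p$ rather than for $G$ itself, and verify that the full-space quantity $\left\langle \nabla p(x) - \nabla p(x'), x - x'\right\rangle$ decomposes into the per-player sum so that it matches the right-hand side exactly. The inequality is in fact an equality up to the discarded nonpositive monotone term, which makes the sign conventions (pairing against $x' - x$ rather than $x - x'$) the most error-prone part of the write-up.
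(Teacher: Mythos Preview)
Your proposal is correct and follows essentially the same argument as the paper: use monotonicity of $G-\nabla p$ to reduce to $\langle \nabla p(x)-\nabla p(x'),\,x'-x\rangle$, then rewrite that quantity via the symmetrized Bregman identity $D_p(u,v)+D_p(v,u)=\langle\nabla p(u)-\nabla p(v),\,u-v\rangle$ applied block-wise. The only cosmetic difference is that the paper writes the second step as an inequality, whereas you (correctly) observe it is in fact an equality.
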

\begin{proof}
   By the monotonicity of $G - \nabla p(x)$, we have
    \begin{align*}
        \left\langle G(x) - G(x^\prime), x^\prime - x\right\rangle 
         \leq \ & \left\langle \nabla p(x) - \nabla p(x^\prime) , x^\prime - x \right\rangle \\
         \leq \ & -\sum_{i \in \mathcal{N}}\left(D_p\left(x_i, x_i^\prime\right) + D_p\left(x_i^\prime, x_i\right)\right) \,,
    \end{align*}
    where the second inequality is due to the definition of Bregman divergence. 
\end{proof}

\begin{lem}[Lemma 3 \cite{lin2021doubly}]\label{lem:self_decord}
    For any self-concordant function $g$ and let $\lambda(x, g) \leq \frac{1}{2}$, $\lambda(x, g):=$ $\|\nabla g(x)\|_{x, \star}=\left\|\left(\nabla^2 g(x)\right)^{-1} \nabla g(x)\right\|_x$, we have $\| x-$ $\arg \min _{x^{\prime} \in \mathcal{X}} g\left(x^{\prime}\right) \|_x \leq 2 \lambda(x, g)$, where $\|\cdot\|_x$ is the local norm given by $\|h\|_x:=\sqrt{h^{\top} \nabla^2 g(x) h}$.
\end{lem}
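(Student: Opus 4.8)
The plan is to reduce the claim to a one–dimensional analysis along the segment joining $x$ to the minimizer. Write $x^\star = \arg\min_{x' \in \mathcal{X}} g(x')$; this minimizer exists and satisfies $\nabla g(x^\star) = 0$, since the hypothesis $\lambda(x,g) < 1$ guarantees attainment by standard self-concordance theory (in the barrier setting in which this lemma is applied, $g$ diverges on $\partial\mathcal{X}$, so $x^\star \in \operatorname{int}(\mathcal{X})$). Setting $v = x^\star - x$ and $r = \|v\|_x = \|x - x^\star\|_x$, the target $\|x - x^\star\|_x \le 2\lambda(x,g)$ becomes $r \le 2\lambda(x,g)$. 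The heart of the argument is the strong-monotonicity-type estimate
\begin{align*}
\langle \nabla g(x),\, x - x^\star\rangle \ \ge\ \frac{r^2}{1+r}\,,
\end{align*}
which I will establish first and then combine with Cauchy--Schwarz in the local norm.

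To prove this estimate I would track the local norm of $v$ along the segment $x + tv$, $t \in [0,1]$, which stays in $\operatorname{int}(\mathcal{X})$ by convexity. Let $\rho(t) = \|v\|_{x+tv} = \sqrt{v^\top \nabla^2 g(x+tv) v}$, so that $\rho(0) = r$. Differentiating $\rho(t)^2$ and invoking the third-order self-concordance bound $|\nabla^3 g(y)[w,w,w]| \le 2\,(w^\top \nabla^2 g(y) w)^{3/2}$ at $w = v$ gives $|2\rho(t)\rho'(t)| = |\nabla^3 g(x+tv)[v,v,v]| \le 2\rho(t)^3$, hence the differential inequality $|\rho'(t)| \le \rho(t)^2$. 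Integrating $\tfrac{d}{dt}\rho(t)^{-1} = -\rho'(t)/\rho(t)^2 \le 1$ over $[0,t]$ yields $\rho(t)^{-1} \le r^{-1} + t$, i.e. the lower envelope $\rho(t) \ge r/(1+tr)$, valid for all $t \in [0,1]$ (only this lower bound is used, so no restriction $tr < 1$ is needed). Setting $\phi(t) = \langle \nabla g(x+tv), v\rangle$, we have $\phi'(t) = v^\top \nabla^2 g(x+tv) v = \rho(t)^2$ and $\phi(1) = \langle \nabla g(x^\star), v\rangle = 0$, so
\begin{align*}
\langle \nabla g(x),\, x - x^\star\rangle = -\phi(0) = \phi(1) - \phi(0) = \int_0^1 \rho(t)^2\, dt \ \ge\ \int_0^1 \frac{r^2}{(1+tr)^2}\, dt = \frac{r^2}{1+r}\,,
\end{align*}
which is the claimed estimate.

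It then remains to apply Cauchy--Schwarz in the local norm, $\langle \nabla g(x), x - x^\star\rangle \le \|\nabla g(x)\|_{x,\star}\,\|x - x^\star\|_x = \lambda(x,g)\, r$. Chaining the two bounds gives $\lambda(x,g)\, r \ge r^2/(1+r)$; assuming $r > 0$ (the case $r = 0$ is immediate) and dividing by $r$ yields $\lambda(x,g) \ge r/(1+r)$, equivalently $r(1 - \lambda(x,g)) \le \lambda(x,g)$, so $r \le \lambda(x,g)/(1 - \lambda(x,g))$. The hypothesis $\lambda(x,g) \le \tfrac12$ gives $1 - \lambda(x,g) \ge \tfrac12$, whence $r \le \lambda(x,g)/(1-\lambda(x,g)) \le 2\lambda(x,g)$, as required. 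The main technical obstacle is the second paragraph: extracting the differential inequality $|\rho'(t)| \le \rho(t)^2$ from the third-order definition of self-concordance and integrating it to the sharp envelope $r/(1+tr)$; everything afterward is Cauchy--Schwarz and elementary algebra. A secondary point to state carefully is the existence of $x^\star$, which rests on $\lambda(x,g) < 1$ and standard self-concordant (barrier) theory.
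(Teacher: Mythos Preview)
Your argument is correct and is the standard self-concordance derivation (integrate the differential inequality $|\rho'|\le\rho^2$ along the segment, then Cauchy--Schwarz in the local norm). The paper itself offers no proof at all: Lemma~\ref{lem:self_decord} is merely cited from \cite{lin2021doubly}, so there is nothing to compare against beyond noting that you have supplied a full proof where the paper relies on an external reference.
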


\begin{lem}[Lemma 7 of \cite{lin2021doubly}]\label{lem:grad_bias}
    Suppose that $c_i$ is a monotone function and $A_i \in \mathbb{R}^{d \times d}$ is an invertible matrix for each $i \in \mathcal{N}$, we define the smoothed version of $c_i$ with respect to $A_i$ by $\hat{c}_i(x)=\mathbb{E}_{w_i \sim \mathbb{B}^{d}} \mathbb{E}_{\mathbf{z}_{-i} \sim \Pi_{j \neq i} \mathbb{S}^{d}}\left[c_i\left(x_i+A_i w_i, \hat{x}_{-i}\right)\right]$ where $\mathbb{S}^{d}$ is a $d$-dimensional unit sphere, $\mathbb{B}^{d}$ is a $d$-dimensional unit ball and $\hat{x}_i=$ $x_i+A_i z_i$ for all $i \in \mathcal{N}$. Then, the following statements hold true:
    \begin{itemize}
        \item $\nabla_i \hat{c}_i(x)=\mathbb{E}\left[d \cdot c_i\left(\hat{x}_i, \hat{x}_{-i}\right)\left(A_i\right)^{-1} z_i \mid x_1, x_2, \ldots, x_N\right]$.
        \item If $\nabla c_i $ is $\ell_i$-Lipschitz continuous and we let $\sigma_{\max}(A)$ be the largest eigenvalue of $A$, we have $\left\|\nabla_i \hat{c}_i(x)-\nabla_i c_i(x)\right\| \leq \ell_i \sqrt{\sum_{j \in \mathcal{N}}\left(\sigma_{\max }\left(A_j\right)\right)^2}$.
    \end{itemize}
\end{lem}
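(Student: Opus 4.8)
The plan is to establish the two bullets separately: the first is an exact integral identity (the ellipsoidal one-point estimator formula), and the second is a quantitative smoothing-bias bound that follows from the Lipschitz continuity of $\nabla c_i$. Throughout I would fix the realization of $z_{-i}\sim\prod_{j\neq i}\mathbb{S}^d$ (hence of $\hat{x}_{-i}=\{x_j+A_jz_j\}_{j\neq i}$) and treat $g(x_i):=\mathbb{E}_{w_i\sim\mathbb{B}^d}[c_i(x_i+A_iw_i,\hat{x}_{-i})]$ as a function of $x_i$ alone; the stated claims then follow by taking the outer expectation over $z_{-i}$.

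For the first bullet I would pass to the ellipsoid $\mathcal{E}=A_i\mathbb{B}^d$ by the change of variables $u=A_iw_i$, which rewrites $g(x_i)=\frac{1}{\mathrm{vol}(\mathcal{E})}\int_{\mathcal{E}}c_i(x_i+u,\hat{x}_{-i})\,du$ using $\mathrm{vol}(\mathcal{E})=|\det A_i|\,\mathrm{vol}(\mathbb{B}^d)$. Differentiating under the integral and replacing $\nabla_{x_i}$ by $\nabla_u$, I would apply the divergence theorem componentwise to the field $u\mapsto c_i(x_i+u,\hat{x}_{-i})e_k$ to obtain $\nabla_{x_i}g(x_i)=\frac{1}{\mathrm{vol}(\mathcal{E})}\int_{\partial\mathcal{E}}c_i(x_i+u,\hat{x}_{-i})\,\mathbf{n}(u)\,dS(u)$, where $\mathbf{n}$ is the outward unit normal. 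The key geometric step is to push the surface integral back to the unit sphere via the parametrization $z\mapsto A_iz$ and Nanson's formula $\mathbf{n}(A_iz)\,dS_{\partial\mathcal{E}}=\det(A_i)\,A_i^{-\top}z\,dS_{\mathbb{S}^d}(z)$; combined with $\mathrm{vol}(\mathcal{E})=|\det A_i|\,\mathrm{vol}(\mathbb{B}^d)$ and the fact that the ratio of the surface measure of $\mathbb{S}^d$ to $\mathrm{vol}(\mathbb{B}^d)$ equals $d$, this yields $\nabla_{x_i}g(x_i)=d\,\mathbb{E}_{z_i\sim\mathbb{S}^d}[c_i(x_i+A_iz_i,\hat{x}_{-i})A_i^{-\top}z_i]$. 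Since in Algorithm~\ref{alg} each $A_i^t$ is a symmetric positive-definite square root, $A_i^{-\top}=A_i^{-1}$; taking the outer expectation over $z_{-i}$ then gives exactly $\nabla_i\hat{c}_i(x)=\mathbb{E}[d\cdot c_i(\hat{x}_i,\hat{x}_{-i})(A_i)^{-1}z_i\mid x_1,\dots,x_N]$.

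For the second bullet I would instead keep the ball representation and differentiate directly, justifying the interchange of $\nabla_{x_i}$ and $\mathbb{E}_{w_i}$ by dominated convergence (as $\nabla c_i$ is continuous and the domain compact), to get $\nabla_i\hat{c}_i(x)=\mathbb{E}_{w_i,z_{-i}}[\nabla_i c_i(x_i+A_iw_i,\hat{x}_{-i})]$, hence $\nabla_i\hat{c}_i(x)-\nabla_i c_i(x)=\mathbb{E}_{w_i,z_{-i}}[\nabla_i c_i(x_i+A_iw_i,\hat{x}_{-i})-\nabla_i c_i(x_i,x_{-i})]$. Applying Jensen's inequality, then $\ell_i$-Lipschitzness of the gradient, then Jensen again for the concave square root gives
\[
\|\nabla_i\hat{c}_i(x)-\nabla_i c_i(x)\|\le \ell_i\sqrt{\mathbb{E}_{w_i}\|A_iw_i\|^2+\textstyle\sum_{j\neq i}\mathbb{E}_{z_j}\|A_jz_j\|^2}.
\]
Bounding $\|A_iw_i\|\le\sigma_{\max}(A_i)\|w_i\|\le\sigma_{\max}(A_i)$ (since $\|w_i\|\le1$) and $\|A_jz_j\|\le\sigma_{\max}(A_j)$ (since $\|z_j\|=1$) collapses the right-hand side to $\ell_i\sqrt{\sum_{j\in\mathcal{N}}\sigma_{\max}(A_j)^2}$, which is the claim. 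I note that monotonicity of $c_i$ is not actually needed here; only differentiability and the Lipschitz gradient are used.

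The main obstacle is the first bullet, specifically the normal-direction bookkeeping: the scalar Flaxman--Kalai--McMahan identity does not transfer verbatim to a non-isotropic $A_i$, and one must correctly transport the outward normal through the linear map via Nanson's formula to obtain the factor $A_i^{-\top}$ rather than $A_i^{-1}$ or $A_i^{\top}$. Reducing this to the stated $A_i^{-1}$ relies on the symmetry of the preconditioner $A_i^t$ in our setting, which I would state explicitly so that the identity matches the form used in Lemma~\ref{lem:recur} and the subsequent proofs.
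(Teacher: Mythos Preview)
The paper does not prove this lemma; it is quoted as Lemma~7 of \cite{lin2021doubly} and used as an auxiliary black box, so there is no in-paper argument to compare against. Your proposal is nonetheless correct. For the first bullet, the divergence-theorem/Nanson route works, though the lighter derivation is to set $\tilde f(y):=c_i(A_iy,\hat x_{-i})$, apply the isotropic identity $\nabla_y\mathbb{E}_{w\sim\mathbb{B}^d}[\tilde f(y+w)]=d\,\mathbb{E}_{z\sim\mathbb{S}^d}[\tilde f(y+z)z]$ at $y=A_i^{-1}x_i$, and chain-rule back to pick up the factor $A_i^{-\top}$ without any surface-measure bookkeeping. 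Your observation that the stated factor $(A_i)^{-1}$ (rather than $(A_i)^{-\top}$) only matches under symmetry of $A_i$ is correct and worth stating explicitly: the lemma as written claims the identity for an arbitrary invertible $A_i$, which is inaccurate in general, though in Algorithm~\ref{alg} each $A_i^t$ is the symmetric positive-definite square root of $\nabla^2 h+\eta_t(t{+}1)\nabla^2 p$, so the discrepancy is harmless in context. For the second bullet your Jensen--Lipschitz--eigenvalue chain is the standard argument and matches what is in \cite{lin2021doubly}; your remark that the ``monotone'' hypothesis on $c_i$ is unused is also right.
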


\begin{lem}[Lemma 2 \cite{lin2021doubly}] \label{lem:self_decord_upper}
Suppose that $\mathcal{X}$ is a closed, monotone and compact set, $R$ is a $\nu$-self-concordant barrier function for $\mathcal{X}$ and $\bar{x}=\arg \min _{x \in \mathcal{X}} R(x)$ is a center. Then, we have $R(x)-R(\bar{x}) \leq \nu \log \left(\frac{1}{1-\pi_{\bar{x}}(x)}\right)$. For any $\epsilon \in(0,1]$ and $x \in \mathcal{X}_\epsilon$, we have $\pi_{\bar{x}}(x) \leq \frac{1}{1+\epsilon}$ and $R(x)-R(\bar{x}) \leq \nu \log \left(1+\frac{1}{\epsilon}\right)$.
\end{lem}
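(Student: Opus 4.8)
The plan is to reduce this $d$-dimensional statement about $R$ to a one-dimensional one by restricting $R$ to the ray emanating from the analytic center $\bar x$ through $x$ and continued to the boundary $\partial\mathcal X$, and then to analyze the resulting scalar barrier through a differential inequality extracted from the defining property of a $\nu$-self-concordant barrier.

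First I would translate coordinates so that $\bar x = 0$, and set $\bar s := 1/\pi_{\bar x}(x)$; by the definition of the gauge $\pi_{\bar x}$ the point $\bar s\,x$ lies on $\partial\mathcal X$ while $x = 1\cdot x \in \operatorname{int}(\mathcal X)$, so $\bar s \ge 1$. Define $\phi(s):=R(sx)$ on $[0,\bar s)$, so that $\phi(0)=R(\bar x)$, $\phi(1)=R(x)$, and the barrier property forces $\phi(s)\to\infty$ as $s\uparrow\bar s$. Since $\bar x$ minimizes $R$ it is an interior critical point, hence $\nabla R(\bar x)=0$ and $\phi'(0)=\langle\nabla R(0),x\rangle=0$; convexity of $R$ then makes $\phi$ convex with $\phi'\ge 0$ and $\phi'$ non-decreasing. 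Specializing the barrier inequality $|\nabla R(y)^\top\lambda|\le\sqrt\nu\,(\lambda^\top\nabla^2 R(y)\lambda)^{1/2}$ to $y=sx$, $\lambda=x$ gives $\phi'(s)^2\le\nu\,\phi''(s)$.

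The key step is to promote $\phi''\ge\phi'^2/\nu$ to the explicit envelope $\phi'(s)\le\nu/(\bar s-s)$ on $[0,\bar s)$. Where $\phi'(s)=0$ this is trivial, so fix $s_0$ with $\phi'(s_0)>0$; the scalar ODE $y'=y^2/\nu$, $y(s_0)=\phi'(s_0)$, has the closed form $y(s)=\nu/\bigl(\nu/\phi'(s_0)-(s-s_0)\bigr)$, diverging exactly at $s_0+\nu/\phi'(s_0)$. A Gr\"onwall-type comparison (set $w=\phi'-y$; then $w(s_0)=0$ and $w'=\phi''-y'\ge\tfrac1\nu(\phi'+y)\,w$, so $w\ge 0$ on their common interval) gives $\phi'\ge y$; since $\phi'$ is finite throughout $[s_0,\bar s)$, necessarily $s_0+\nu/\phi'(s_0)\ge\bar s$, i.e.\ $\phi'(s_0)\le\nu/(\bar s-s_0)$. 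Integrating over $[0,1]$ and using $\bar s=1/\pi_{\bar x}(x)$,
\begin{align*}
R(x)-R(\bar x)=\int_0^1\phi'(s)\,\mathrm{d}s\le\int_0^1\frac{\nu}{\bar s-s}\,\mathrm{d}s=\nu\log\frac{\bar s}{\bar s-1}=\nu\log\frac{1}{1-\pi_{\bar x}(x)}.
\end{align*}

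For the second claim, the shrunken set $\mathcal X_\epsilon$ consists (by definition) of points $\bar x+t(y-\bar x)$ with $y\in\mathcal X$ and $t\le 1/(1+\epsilon)$, equivalently of points of gauge $\pi_{\bar x}(\cdot)\le 1/(1+\epsilon)$; hence for $x\in\mathcal X_\epsilon$ we get $1-\pi_{\bar x}(x)\ge\epsilon/(1+\epsilon)$ and therefore $R(x)-R(\bar x)\le\nu\log\bigl((1+\epsilon)/\epsilon\bigr)=\nu\log(1+1/\epsilon)$. I expect the comparison/envelope step to be the only substantive part — everything else is bookkeeping — and its single subtlety, the degeneracy $\phi'(0)=0$ at the center, is harmless since the envelope bound is vacuous there and one may split $\int_0^1=\int_0^{s_0}+\int_{s_0}^1$ and let $s_0\downarrow 0$.
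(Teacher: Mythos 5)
Your argument is correct. Note that the paper does not prove this lemma at all: it is imported verbatim as Lemma~2 of \cite{lin2021doubly}, and is in turn the classical Nesterov--Nemirovski bound for a $\nu$-self-concordant barrier around its analytic center. Your proof is essentially the standard one for that classical fact: restrict to the ray from $\bar x$ through $x$, use $\nabla R(\bar x)=0$ and the defining inequality $|\nabla R(y)^{\top}\lambda|\le\sqrt{\nu}\,(\lambda^{\top}\nabla^2 R(y)\lambda)^{1/2}$ with $y=sx$, $\lambda=x$ to get $\phi'(s)^2\le\nu\,\phi''(s)$, and then the envelope $\phi'(s)\le\nu/(\bar s-s)$ followed by integration. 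Your comparison step is sound (the usual shortcut is to integrate $\frac{d}{ds}\bigl(-1/\phi'(s)\bigr)=\phi''/\phi'^2\ge 1/\nu$ directly on any interval where $\phi'>0$, which avoids the Gr\"onwall detour, but your version is equivalent), and the identification $\bar s=1/\pi_{\bar x}(x)$ matches the gauge $\pi_x(y)=\inf\{t\ge 0: x+\tfrac1t(y-x)\in\mathcal X\}$ used in the paper's proofs. The only caveat is the second claim: $\mathcal X_\epsilon$ is never defined in this paper, so your derivation of $\pi_{\bar x}(x)\le 1/(1+\epsilon)$ rests on the standard shrunken-set definition $\mathcal X_\epsilon=\bar x+\tfrac{1}{1+\epsilon}(\mathcal X-\bar x)$ from \cite{lin2021doubly}; under that definition it is immediate, as you say, and the main part of the lemma (the one actually invoked in the paper, with $\pi_{x_i^1}(\omega_i)\le 1-1/\sqrt{T}$) is fully covered by your first argument.
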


\newpage
\section{More Experimental Results} \label{sec:moreexp}

In Figure \ref{fig:fig2} and \ref{fig:fig3} we supplement more experiment results for zero-sum matrix games and Cournot competition. Note that in Figure \ref{fig:fig3}, the curve of OMD with gradient coincides exactly with the curve GD with gradient. We found similar observations that our algorithm attains comparable performance to OMD and GD with full information gradient.

\begin{figure}[h]
    \centering
    \includegraphics[width=0.3\textwidth]{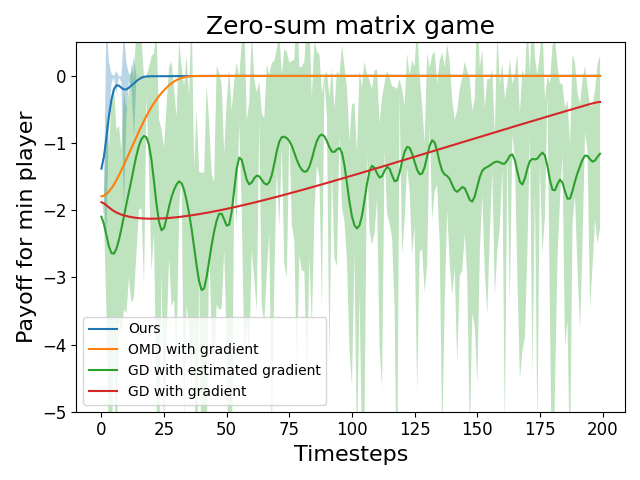}
    \includegraphics[width=0.3\textwidth]{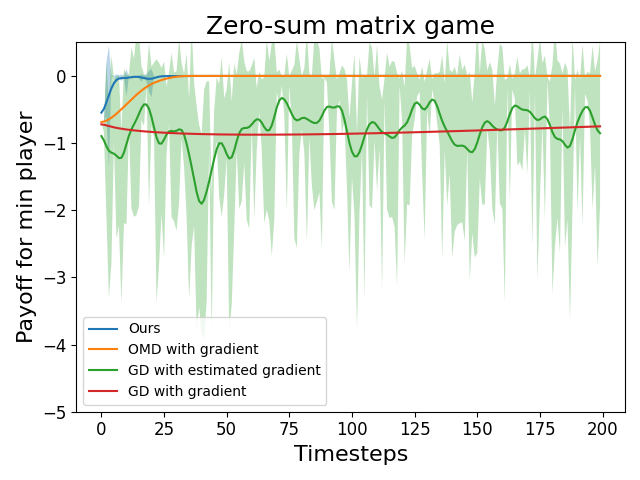}
    \includegraphics[width=0.3\textwidth]{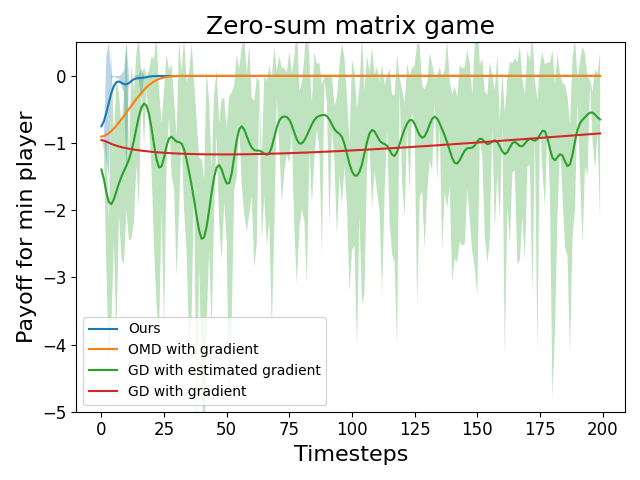}
    \caption{More examples on the zero-sum matrix game, with $A$ being $[2, 1], [1, 3]$, $[3, 0], [0, 1]$, and $[1, 2], [2, 0]$. }
    \label{fig:fig2}
\end{figure}

\begin{figure}[h]
    \centering
    \includegraphics[width=0.3\textwidth]{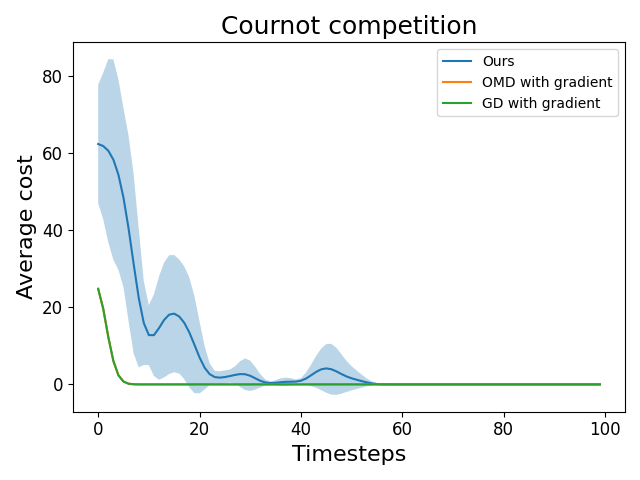}
    \includegraphics[width=0.3\textwidth]{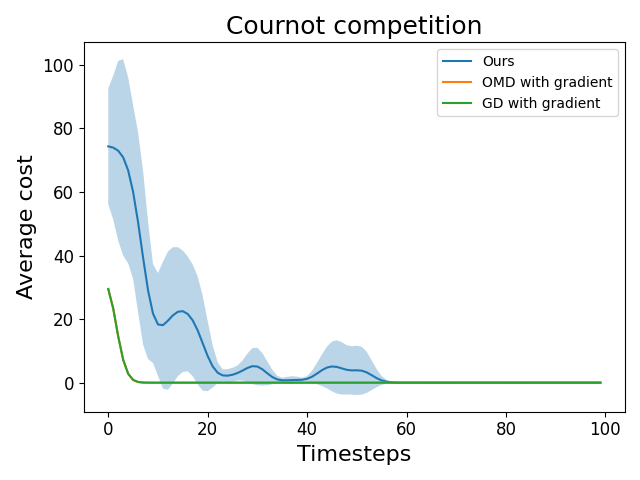}
    \includegraphics[width=0.3\textwidth]{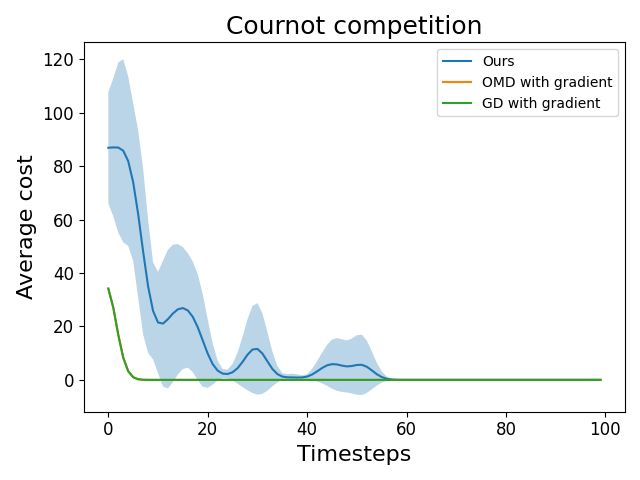}
    \caption{More examples on the Cournot competition, with the marginal cost being $50, 60, 70$. }
    \label{fig:fig3}
\end{figure}
\end{document}